\newtheorem{theorem}{Theorem}
\newtheorem{condition}[theorem]{Condition}
\newtheorem{proposition}[theorem]{Proposition}
\newenvironment{proof}[1][Proof]{\textbf{#1.} }{\ \rule{0.5em}{0.5em}}
\begin{document}

\title{Is Born-Jordan really the universal Path Integral Quantization Rule?}

\author{John E.~Gough \\
\texttt{jug@aber.ac.uk}\\
   Aberystwyth University, SY23 3BZ, Wales, United Kingdom}

\date{}

\maketitle

\begin{abstract}
    It has been argued that the Feynman path integral formalism leads to a quantization rule, and that the Born-Jordan rule is the unique quantization rule consistent with the correct short-time propagator behavior of the propagator for non-relativistic systems. We examine this short-time approximation and conclude, contrary to prevailing views, that the asymptotic expansion applies only to Hamiltonian functions that are at most quadratic in the momentum and with constant mass. While the Born-Jordan rule suggests the appropriate quantization of functions in this class, there are other rules which give the same answer, most notably the Weyl quantization scheme.
\end{abstract}

\section{Introduction}

A quantization rule is a map $\mathscr{Q}$ taking functions $f=f(q,p)$ on phase space with canonical variables $(q,p)$ to operators $\hat f= \mathscr{Q}(f)$ expressed in terms of the canonical observables $\hat q , \hat p$ satisfying $[\hat q , \hat p] = i \hbar \, \hat I$. We list some basic requirements: linearity, i.e., $\mathscr{Q}(af+bg) =a\mathscr{Q}(f)+b \mathscr{Q}(g)$ for $a$ and $b$ constant; that $\mathscr{Q}(f_0)=f_0 \, \hat I$ for $f_0$ constant; and that $\mathscr{Q}(f(q))=f(\hat q)$ and $\mathscr{Q}(g(p))=g(\hat p )$.

An example is the $\tau$- rule $\mathscr{S}_\tau$ \cite{Berezin_Shubin}, for $\tau \in [0,1]$, by giving the kernel in the position representation as
\begin{eqnarray}
\langle q_1 | \mathscr{S}_\tau (f) | q_2 \rangle = 
\int f \big(  (1-\tau ) q_1 + \tau q_2 , p ) e^{i (q_1 -q_2) p/\hbar } \, \frac{dp}{2 \pi \hbar } .
\end{eqnarray}
In particular, one has
\begin{eqnarray}
    \mathscr{S}_\tau (q^s  p^r)
= \sum_{m\ge 0}  \binom{r}{m} (1 -\tau )^m \tau^{r-m} \, \hat p^ {r-m} \hat q^s \hat p^m ,
\label{eq:tau_form}
\end{eqnarray}
The Weyl quantization rule $\mathscr{W}$ is the special case $\mathscr{S}_\tau$ with $\tau = 1/2$. The Weyl quantization rule may be characterized as
\begin{eqnarray}
    e^{i(qp_0 - pq_0)/\hbar } \mapsto e^{i(\hat q p_0 - \hat pq_0)/\hbar } .
\end{eqnarray}
We say that other rules belong to the Cohen class \cite{Cohen1966} if they are characterized by
\begin{eqnarray}
    e^{i(qp_0 - pq_0)/\hbar } \mapsto \Xi (q_0, p_0) \,e^{i(\hat q p_0 - \hat pq_0)/\hbar } .
\end{eqnarray}
The multiplier should satisfy $\Xi(q_0 , 0 )=\Xi (0, p_0) =1$ so that so that functions of $q$ (respectively $p$) only get mapped to their operator counterparts. Ideally, we have that $\Xi$ is non zero so we may apply the rule to general functions via Fourier transform on phase space. The $\tau$-quantization rule corresponds to $\Xi_\tau (q_0,p_0) = e^{-i (\tau - \frac{1}{2}) q_0p_0/\hbar}$.

Quantization rules form a convex set. In particular, let $\mathbb{P}$ be a probability measure on a measurable set $\Omega$ and $\omega \mapsto\mathscr{Q}_\omega (\cdot )$ a measurable assignment of quantization mappings, then $\int_\Omega \mathscr{Q}_\omega (\cdot ) \, \mathbb{P} [ d \omega ]$ is again a quantization. In particular, if $\mathscr{Q}_\omega$ lies in the Cohen class with multiplier $\Xi_\omega$, then so does $\int_\Omega \mathscr{Q}_\omega (\cdot ) \, \mathbb{P} [ d \omega ]$
and its multiplier is then $\int_\Omega \Xi_\omega (\cdot ) \, \mathbb{P} [ d \omega ]$.

Historically, the first quantization rule was the one introduced by Born and Jordan \cite{BJ1925a,BJ1925b}: $\mathscr{B}(q^n p^m) =\frac{1}{n+1}\sum_{m=0}^n \hat q^m \hat p^r \hat q^{n-m}$. The rationale was that this is the unique quantization for which $\frac{1}{i \hbar}[\mathscr{B}(f), \mathscr{B}(g)] = \mathscr{B}\big( \{f,g\} \big)$ whenever $f=f(q)$, $g=g(p)$ and $\{f,g\}$ are Poisson brackets. We also note that the Born-Jordan rule is the uniform randomization of the $\tau$-quantization rule:
\begin{eqnarray}
    \mathscr{B} (\cdot ) \equiv \int_0^1 \mathscr{S}_\tau (\cdot ) \, d \tau .
\end{eqnarray}
For reference, $\mathscr{W}(q^2 p^2)=\frac{1}{6} (\hat q^2 \hat p^2 + \hat q \hat p \hat q \hat p + \hat q \hat p^2 \hat p + \hat p \hat q^2 \hat p +  \hat p \hat q \hat p \hat q+\hat p^2 \hat q^2 )$ while $\mathscr{B}(q^2 p^2) = \frac{1}{3} (\hat q^2 \hat p^2+ \hat q \hat p^2 \hat q +\hat p^2 \hat q^2 )$

%%%%%%%%%%%%%%%%%%%%%%%%%%%%%%%%%%%%%%%%%%%%%%%%%%%%%
%%%%%%%%%%%%%%%%%%%%%%%%%%%%%%%%%%%%%%%%%%%%%%%%%%%%%
\subsection{Feynman's Path Integral Approach}
In non-relativstic quantum mechanics, the state $|\psi \rangle$ of a system evolves from event $A$ with spacetime coordinates $(q_A,t_A)$ to a later event $B$ with $(q_B,t_B)$ by a unitary $\hat U (t_B,t_A) = e^{-i \hat H (t_B-t_A) /\hbar}$ where $\hat H$ is an observable (the Hamiltonian). We may use the Dirac notation $\langle A | \psi \rangle =\psi (A) = \psi (q_A, t_A) $ for the wave-function. The wave-function will then satisfy the Schr\"{o}dinger equation, $i \hbar \frac{\partial }{\partial t} \psi = \hat H \, \psi$.

Once the Hamiltonian $\hat H$ governing the evolution has been fixed, we may introduce the propagator $\langle B|A \rangle =\langle q_B.t_B |q_A,t_A \rangle$, fro $t_B \ge t_A$ which is the Green's function for the Schr\"{o}dinger equation:
\begin{eqnarray}
    \psi (q_B,t_B ) = \int_{\text{space}} 
    \langle q_B , t_B |q_A,t_A \rangle \, \psi (q_A,t_A ) \, dq_A ,
\end{eqnarray}
which may be succinctly written as $\langle B | \psi \rangle =  \int_{\text{space}} 
    \langle B |A \rangle \, \langle A | \psi \rangle \, dq_A $. The propagator itself is\begin{eqnarray}
    \langle B |A \rangle =
    \langle q_B | \hat U (t_B,t_A ) |q_B \rangle 
    \end{eqnarray}
Since the Schr\"{o}dinger equation is first order in time, we get the consistency condition
\begin{eqnarray}
      \langle C |A \rangle =\int_{\text{space}} 
    \langle C | B \rangle \,\langle B |A \rangle\, dq_B 
\label{eq:CK}
\end{eqnarray}
whenever $t_C \ge t_B\ge t_A$.

The consistency condition may be repeated to give
\begin{eqnarray}
    \langle B |A \rangle =\int_{\text{space}^n} 
    \langle B | X_n \rangle \,\langle X_n |X_n-1  \rangle \cdots \langle X_1 | A \rangle \, dq_n \cdots dq_1
\end{eqnarray}
where $X_k$ are events $(q_k, t_k)$ with the chronological ordering $t_B > t_n > t_n-1 >\cdots >t_1> t_A$. We may think of $A,X_1,\cdots, X_n,B$ as events on a path from $A$ to $B$ is spacetime. It was Feynman \cite{Feynman_Hibbs} who took this a step further and 
introduced the path integral expression
\begin{eqnarray}
    \langle B |A \rangle =\int_{B \leftarrow A}
    e^{i S [ \mathsf{q}]/\hbar } \, \mathcal{D} \mathsf{q} 
\end{eqnarray}
Where the integral is over all spacetime paths $\mathsf{q}$ of the form $q=q(t)$ for $t_A \le t \le t_B$ with $q(t_A)=q_A$ and $q(t_B)=q_B$.
The phase term involves the action $S[\mathsf{q}]= \int_{t_A}^{t_B} L\left( q(t), \dot q(t) \right) \, dt$ where the $L$ is the Lagrangian, assumed to be the Legendre transformation $  L(q,v) = \sup_{p} \left( pv - H(q,p) \right)$.

The propagator may also be represented as the phase space path integral
\begin{eqnarray}
    \langle B|A \rangle = \int_{B \leftarrow A} e^{i S[\mathsf{q,p}]/\hbar} \, \mathcal{D} \mathsf{q}
    \mathcal{D} \mathsf{p}
\end{eqnarray}
where we integrate over all paths $\mathsf{q,p}$ in phase space parametrized by time $t_A \le t\le t_B$ with the endpoint conditions $\mathsf{q} (t_A) =q_A$, $\mathsf{q} (t_B) = q_B$.
Here,
\begin{eqnarray}
    S[ \mathsf{q,p}] = \int_{t_A}^{t_B}
    \bigg( p(t) dq(t) -H \left( q(t), p(t) \right) \,dt \bigg).
\end{eqnarray}

If $ t_B-t_A$ is small, one might argue that dominant contribution comes from the linear path (see Figure \ref{fig:kerner_path} below)
\begin{eqnarray}
    q_{\text{lin.}} (t) =q_A + \frac{q_B-q_A}{t_B-t_A} (t-t_A ),
    \quad 
    p_{\text{lin.}} (t) = p \quad \text{(constant)}.
\end{eqnarray}

\begin{figure}[h]
    \centering
    \includegraphics[width=0.5\linewidth]{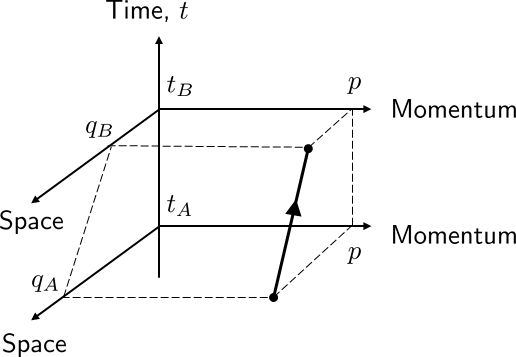}
    \caption{The linear path from $(q_A,p,t_A)$ to $(q_B,p,t_B)$ drawn  in space-momentum-time.}
    \label{fig:kerner_path}
\end{figure}
We then get the small-time approximation \cite{Garrod} (anticipating the correct coefficient)
\begin{eqnarray}
    \langle B|A \rangle \approx \frac{1}{(2 \pi \hbar )} \int_{-\infty}^\infty
    e^{i \left(  p(q_B-q_A ) - \bar{H} (q_A,q_B,p) \, (t_B-t_A ) \right) / \hbar} \, dp,
    \label{eq:zzz}
\end{eqnarray}
where we encounter the average Hamiltonian
\begin{eqnarray}
    \bar H (q_A,q_B,p) &=& \frac{1}{t_B-t_A} \int_{t_A}^{t_B}
    H ( \mathsf{q}_{\text{lin.}} (t) , p ) \, dt 
    \nonumber \\
    &=& \int_0^1 H \left( q_A (1- \tau ) + q_B \tau , p \right) \, d\tau .
    \label{eq:average_H}
\end{eqnarray}

Taking $q_B-q_A$ to be fixed, but $t_B-t_A$ to be small, the exponential in (\ref{eq:zzz}) may be expanded to first order as
\begin{eqnarray}
   \langle B|A \rangle \approx \delta(q_B-q_A ) + \frac{t_B-t_A}{i \hbar} 
    \frac{1}{(2 \pi \hbar )} \int_{-\infty}^\infty
    e^{i  p(q_B-q_A )}  \bar {H} (q_A,q_B,p) \, dp .
    \label{eq:kerner_short}
\end{eqnarray}
The small time behavior of the Schr\"{o}dinger equation tells us that $\psi (B) \approx \psi (q_B,t_A) + \frac{ t_B-t_A}{i \hbar} \left( \hat H \psi \right) (B)$ and by comparison Kerner and Sutcliffe advanced an argument \cite{Kerner} observed that
\begin{eqnarray}
    \langle q_A | \hat H | q_B \rangle \equiv \frac{1}{(2 \pi \hbar )} \int_{-\infty}^\infty
    e^{i  p(q_B-q_A )}  \bar{H} (q_A,q_B,p) \, dp \equiv \langle q_A | \,\mathscr{B} (\hat H ) | q_B \rangle ,
    \label{eq:kerner_int}
\end{eqnarray}
and so they advanced the argument that the Born-Jordan rule emerges as the correct physical rule.

%%%%%%%%%%%%%%%
\subsection{Quantization Through Path Integration?}
The Feynman path integral approach \cite{Feynman_Hibbs} apparently gives us a method to go from a classical Hamilton's function $H=H(q,p)$ on phase space to a quantum mechanical model having a well defined Hamiltonian operator $\hat H$. This is achieved by realizing the quantum propagator as a functional integral of a phase terms involving by the action associated to paths, see also the development by Garrod \cite{Garrod}. In principle, this amounts to a quantization procedure!

By the short-time propagator argument above, Kerner and Sutcliffe \cite{Kerner} propose that the privileged quantization rule arising from Feynman's approach is the Born-Jordan rule. However, this was almost immediately challenged by Cohen \cite{Cohen1970} (in fact, both papers appeared in the same issue of Journal of Math. Phys.!) who argued that the limit form for the propagator, when performing the usual Trotter time-slicing \cite{Trotter,Nelson}, was insensitive to the approximation step used for the short-time propagator and that the different choices lead to different quantization rules.

For example, the $\tau$-quantization corresponds to taking the average as an intermediary point evaluation $H \big( (1-\tau )q_A +\tau q_B, p \big)$ along the path such as for $0 \le \tau \le 1$. (Weyl then being the mid-point rule!) Indeed, the only requirement according to Cohen was that the average $\bar H (q_A , q_B , p)$ should converge to $H(q,p)$ as $q_A$ and $q_B$ converge to $q$.

%%%%%%%%%%%%%%%%%%%%%%%%%%%%%%%%%%%%
%%%%%%%%%%%%%%%%%%%%%%%%%%%%%%%%%%%%%%%
  For Hamiltonians of the form $H= \frac{1}{2m}p^2+V(q)$, the essence of the argument is that the terms is the $N$-time slice form
\begin{eqnarray}
    \langle q_B | \, e^{-i \hat H (t_B-t_A)/\hbar}\, | q_A \rangle =
    \int dq_{N-1} \cdots \int dq_1
\, \prod_{k=1} \langle q_{k+1} | e^{-i \hat H (t_B-t_A)/N\hbar}\, | q_k \rangle
\end{eqnarray}
can be approximated as
\begin{eqnarray}
    \langle q_{k+1} |  \,  e^{-i \hat H \Delta t/\hbar}\, | q_k \rangle
    \approx \sqrt{\frac{m}{2\pi i \, \hbar\Delta t}}
    \exp \bigg\{ \frac{im}{2 \hbar \Delta t} (\Delta q )^2
    -\frac{i \Delta t}{\hbar} \overline{V}(q_k,q_{k+1}) \bigg\}.
    \label{eq:slice}
\end{eqnarray}
Here $N$ is assumed to be large, $\Delta t = (t_B-t_A)/N$ is assumed small, $\Delta q = q_{k+1}-q_k$ is arbitrary, and $\overline{V}(q_k,q_{k+1})$ is an average of the potential between the points $q_k$ and $q_{k+1}$. Roughly speaking, so long as both $\Delta t \to 0^+$ \textit{and} $\Delta q \to 0$ then the choice of average is largely unimportant as far as the limit is concerned: the reason that $\Delta q$ can be made limiting small being due to a stationary phase argument for the $(\Delta q)^2$ term in the exponential in (\ref{eq:slice}). Cohen's rebuttal was uncontested and widely accepted. 

However, the situation was revisited by Makri and Miller \cite{Makri_Miller_88} who calculated the short time limit for the classical action $S_{\text{cl.} }(B|A)$ with $\Delta q$ held fixed. They showed that the correct short-time approximation was the one suggested by Kerner and Sutcliffe. 
Their derivation used a semi-classical approximation \cite{Schulman} and an expansion of the Hamilton-Jacobi equation. Makri and Miller \cite{Makri_Miller_88} arrive at the following approximation
\begin{eqnarray}
    \langle B | A\rangle \approx \sqrt{\frac{m}{2\pi i \, \hbar\Delta t}}
    e^{ i S_{\text{cl.} }(B|A)/\hbar}+ O\big( (\Delta t)^2 \big).
\end{eqnarray}
Note that this is established for Hamilton's functions of the form $H(q,p) = \frac{1}{2m} p^2 +V(q)$ only.

In a somewhat polemic article, Kauffmann \cite{Kauffmann} argued that Cohen's invocation of continuity of paths to argue that $\Delta q$ should vanish as $\Delta t$ goes to zero is irrelevant to this approximation step. We shall refer to this as the \textit{Kauffmann Trap}. Kauffmann \cite{Kauffmann}, and later De Gosson \cite{dG2016,dG2018}, thereby argued that the Born-Jordan quantization rule is the natural one to use physically.

\subsection{Our Contribution}
We shall consider an approximation scheme for phase space trajectories for classical systems governed by a Hamiltonian function $H$, under the assumption that we have a long way to go ($\Delta q$ fixed) and a short time to get there $(\varepsilon=\Delta t \to 0^+)$. We shall refer to this as the \textit{secular} expansion for the phase trajectories. 

Contrary to the claim by de Gosson (Theorem 2 in \cite{dG2016}) that the short-time expansion is valid for general $H$, we find that the secular expansion is valid only for Hamiltonians of the standard for $H(q,p) = \frac{1}{2m}p^2 +A p+ V(q)$ where $m$ and $A$ must be a constant. In the proof of Theorem 2 in \cite{dG2016}, de Gosson assumes that $\Delta q$ (and $\Delta p$) must tend to zero as $\Delta t \to 0^+$ thereby falling into the Kauffmann trap. Of course, $\Delta q$ should be held fixed (following Kauffmann) while $\Delta p$, in fact, actually needs to be of order $1/\Delta t$ for the particle to travel the fixed distance in the short time (these features will be built into the secular approximation).

We then recover the short-time asymptotic expansion for the classical action found by Makri and Miller \cite{Makri_Miller_88}: their derivation follows an alternative method which uses an asymptotic expansion of the semi-classical action \cite{Schulman}. We additionally give a convenient form for the fifth-order term.

%%%%%%%%%%%%%%%%%%%%%%%%%%%
%%%%%%%%%%%%%%%%%%%%%%%%%%
%%%%%%%%%%%%%%%%%%%%%%%%%%%%%%%
%%%%%%%%%%%%%%%%%%%%%%%%%%%%%%%
\section{The Short-Time Behavior}
In the following, we shall consider classical mechanical systems governed by a Hamiltonian $H=H(q,p)$ which, in turn, can be obtained from a Lagrangian $L=L(q, \dot q)$ by an invertible Legendre transform $L(q,\dot q ) =\max_p \{p\dot q - H(q,p) \}$. With an arbitrary differentiable path $\mathsf{q} = \{ q(t) : t \in [t_A,t_B ] \}$ we may associate the action 
\begin{equation}
    S[\mathsf{q}] = \int_{t_A}^{t_B} L \big(
    q(t), \dot q (t) \big) \, dt .
\end{equation}
Let us impose the endpoint conditions
\begin{eqnarray}
    q(t_A)=q_A, \quad q(t_B) = q_B
\end{eqnarray}
so that the path starts at event $A$ corresponding to $(q_A, t_A)$ and finishes at the later event $B$ given by $(q_B, t_B )$.
Hamilton's Principle tells us that the classical path $\mathsf{q}_{\text{cl.}}$ from $A$ to $B$ is the one that minimizes the action.
Hamilton's principal function (or classical action) is the action $S_{\text{cl.}} (B|A)$ as calculated along the classical path from $A$ to $B$.

Alternatively, let $(\mathsf{q},\mathsf{p})$ be the solution to Hamilton's equations of motion $\dot q = \frac{\partial}{\partial p} H, \dot p = - \frac{\partial}{\partial q} H$ with $\mathsf{q}$ again satisfying the endpoint conditions: then $\mathsf{q}$ is again the classical path from $A$ to $B$. The conjugate path $\mathsf{p}$ in momentum space is fixed by the endpoint conditions. Note that the path $(\mathsf{q},\mathsf{p})$ in phase space is equivalently fixed by giving the data $q_A=q(t_A)$ and $p_A = p(t_A)$ at time $t_A$.

%%%%%%%%%%%%%%%%%%%%%%%%%%%%%%%%%%%%%%%%%
%%%%%%%%%%%%%%%%%%%%%%%%%%%%%%%%%%%%%%%
\subsection{The Secular Approximation}
Our aim is to examine the behavior of the classical paths in the short time regime where $\varepsilon = t_B-t_A$ becomes small while the separation $q_B -q_A$ remains finite. To this end, it is convenient to introduce the new parameter
\begin{eqnarray}
    \tau = \frac{t-t_A}{t_B-t_A} \in [0,1].
\end{eqnarray}
The phase trajectory may then be re-parameterized as
\begin{eqnarray}
    \tilde q (\tau ) = q(t) , \qquad \tilde p (\tau )= p(t) .
\end{eqnarray}
For instance, the linear path in configuration space from $A$ to $B$ is defined as
\begin{eqnarray}
q_{\text{lin.}} (t) = q_A + \frac{q_B-q_B}{t_B-t_A}
(t-t_B)
\end{eqnarray}
and this is re-parameterized as $\tilde q_{\text{lin.}} (\tau ) = \overleftrightarrow{q}_{AB} (\tau )$ where it is convenient to introduce the interpolating function
\begin{eqnarray}
    \overleftrightarrow{q} (\tau )= (1-\tau ) q_A + \tau q_B ,
\end{eqnarray}
(we shall assume that $A$ and $B$ are fixed and drop the $AB$ dependence for convenience).
Note that $u_{BA} = \frac{q_B-q_B}{t_B-t_A}$ is the (constant) velocity needed to get from $A$ to $B$ but that this will be of order $1/\varepsilon$ in our regime. We also have $\frac{d}{d \tau}
\overleftrightarrow{q} (\tau )= q_B-q_A$.

In the following, we shall now write $\tilde q_\varepsilon, \, \tilde p_\varepsilon$ to emphasize the $\varepsilon$-dependence of the solutions!

As an example, let us consider the free particle $H= \frac{p^2}{2m}$. The Hamilton's equations of motion imply that $p(t) =p_A$ (constant) and $\dot q = p/m$ leading to the solution $q= q_{\text{lin.}}$. We must, of course, have $p_A = m \frac{q_B-q_A}{t_B-t_A}$. We therefore have $\tilde{q}_\varepsilon \equiv \tilde q_{\text{lin.}}$ and $\tilde{p}_\varepsilon = m (q_B-q_A) \varepsilon^{-1}$. As is well known, the classical action for a free particle is
\begin{eqnarray}
    S_{\text{cl.}}^{\text{free}} (B|A) =
    \frac{ m(q_B - q_A)^2}{2 (t_B-t_A)}=
    \frac{ 1}{2 }m(q_B - q_A)^2\, \varepsilon^{-1}
\end{eqnarray}
and, in particular, diverges as $\varepsilon \to 0^+$. This $O(\varepsilon^{-1} ) $ behavior is, as we shall see, typical.

The classical path in phase space is an integral curve to the Hamiltonian vector field $(u,w)$ where
\begin{eqnarray}
    u(q,p) = \frac{\partial}{\partial p} H (q,p), \qquad w(q,p) =- \frac{\partial}{\partial q} H (q,p).
\end{eqnarray}

\begin{condition}
    The Hamiltonian vector field $(u,w)$ is analytic in $p$:
    \begin{eqnarray}
        u(q,p ) = \sum_{n \ge 0} u_n (q) \, p^n , \qquad
        w(q,p ) = \sum_{n \ge 0} w_n (q) \, p^n ,
    \end{eqnarray}
    where $u_n (q) , w_n (q)$ are smooth. The coefficient function $w_1$ is also assumed to be positive and bounded away from zero. We shall write it as $u_1 (q) = 1/m(q)$ where $m(q)$ may be interpreted as a position-dependent mass. The function $u_0$ has dimensions of velocity.
\end{condition}

The position variable then satisfies $q(t)=q_A +\int_{t_A}^t u \left( q(s),p(s) \right) \, ds$ or, equivalently,
\begin{eqnarray}
\tilde{q} (\tau ) = q_A +\varepsilon \int_0^\tau u\left( \tilde{q} (\sigma ) , \tilde {p} (\sigma )\right) \, d \sigma.
\end{eqnarray}
The end point conditions imply $q_B - q_A =\varepsilon \int_0^1 v\left( \tilde{q} (\tau ) , \tilde {p} (\tau )\right) \, d \tau$ and, since the left hand side is independent of $\varepsilon$, we conclude that
\begin{eqnarray}
    \lim_{\varepsilon \to 0^+}
    \varepsilon \int_0^1 u\left( \tilde{q}_\varepsilon (\tau ) , \tilde {p}_\varepsilon (\tau )\right) \, d \tau = q_B - q_A .
    \label{eq:lim1}
\end{eqnarray}

It is clear that at least one of thepaths $\tilde q_\varepsilon$ or $\tilde p_\varepsilon$ must become divergent in order to compensate for the factor $\varepsilon$ in (\ref{eq:lim1}). However, $\tilde q_\varepsilon$ is just a re-parametrization of a bounded function, so it is $\tilde p_\varepsilon$ which becomes singular in the $\varepsilon \to 0^+$ limit. (This, of course, corresponds to the intuitive picture that the particle needs a large velocity/momentum to travel the fixed distance in a short time!)

In fact, we shall make the following assumption.

\begin{condition}
    We have $\tilde q_\varepsilon  = \overleftrightarrow{q} + O(\varepsilon )$
uniformly on the interval $[0,1]$ and that we have the expansion
\begin{eqnarray}
    \tilde q_\varepsilon (\tau ) = \overleftrightarrow{q} (\tau ) + \sum_{n=1}^\infty \varepsilon^n \, \chi_n (\tau ) .
\end{eqnarray}
where the functions $\chi_n$ are sufficiently smooth.
\end{condition}

Note that the endpoint conditions require
\begin{eqnarray}
    \chi_n (0) = \chi_n (1) = 0, \qquad (n=1,2,3, \cdots ).
    \label{eq:chi_end}
\end{eqnarray}
It follows that the limit (\ref{eq:lim1}) can be recast as
\begin{eqnarray}
    \lim_{\varepsilon \to 0^+}
    \varepsilon \int_0^1 u\left( \overleftrightarrow{q} (\tau ) , \tilde {p}_\varepsilon (\tau )\right) \, d \tau = q_B - q_A .
    \label{eq:lim2}
\end{eqnarray}

From Conditions 1 and 2, we see that $\varepsilon \, v (\tilde q, \tilde p_\varepsilon )\equiv \varepsilon\, u_0 (\tilde q ) + \varepsilon\, u_1 (\tilde q ) \, \tilde p_\varepsilon+ \varepsilon \, u_2 ( \tilde q ) \, \tilde p_\varepsilon^2 +\cdots$ and we require this to have a finite non-zero limit as $\varepsilon \to 0^+$.

\begin{condition}
    The momentum path $\tilde p _\varepsilon$ takes the \textbf{secular} form
    \begin{eqnarray}
        \tilde p_\varepsilon (\tau) = \sum_{n=-1}^\infty \pi_n (\tau ) \, \varepsilon^n = \pi_{-1} (\tau ) \varepsilon^{-1} + \pi_0 (\tau )+ \pi_1 (\tau ) \varepsilon+ \pi_2 (\tau) \varepsilon^2 + \cdots.
        \label{eq:secular}
    \end{eqnarray}
\end{condition}
Note that the leading term is taken to $O(\varepsilon^{-1} )$ in line with the singular behavior already seen for the free particle.

\subsection{Restrictions on the Hamiltonian}
The assumption that the solutions are secular places severe restrictions on the choice of Hamiltonian.

\begin{proposition}
    The secular condition applies only for Hamiltonians that are at most quadratic in momentum.
\end{proposition}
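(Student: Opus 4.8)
The plan is to substitute the secular ansatz (\ref{eq:secular}) together with the expansion $\tilde q_\varepsilon = \overleftrightarrow{q} + \sum_{n\ge 1}\varepsilon^n \chi_n$ into Hamilton's equation $\dot p = w(q,p)$, re-parametrized in $\tau$ as $\frac{1}{\varepsilon}\frac{d\tilde p_\varepsilon}{d\tau} = w(\tilde q_\varepsilon, \tilde p_\varepsilon)$, and to match powers of $\varepsilon$. The left-hand side is $\sum_{n\ge -1}\pi_n'(\tau)\,\varepsilon^{n-1}$, whose lowest-order term is $\pi_{-1}'(\tau)\,\varepsilon^{-2}$. On the right-hand side, because $w(q,p)=\sum_{k\ge 0}w_k(q)\,p^k$ and the leading behavior of $\tilde p_\varepsilon$ is $\pi_{-1}(\tau)\varepsilon^{-1}$, the term $w_k(\tilde q_\varepsilon)\tilde p_\varepsilon^k$ contributes at leading order $w_k(\overleftrightarrow q(\tau))\,\pi_{-1}(\tau)^k\,\varepsilon^{-k}$. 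So if $w$ genuinely contains a term of degree $k$ in $p$ (with $w_k$ not identically zero), the right-hand side has a contribution at order $\varepsilon^{-k}$, i.e. a divergence no milder than $\varepsilon^{-k}$ unless it cancels against something.

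First I would extract the most singular order. If $w$ has degree $K\ge 1$ in $p$ (say $w_K\not\equiv 0$ is the top coefficient), the right-hand side carries a term $w_K(\overleftrightarrow q(\tau))\,\pi_{-1}(\tau)^K\,\varepsilon^{-K}$, while the left-hand side only reaches down to $\varepsilon^{-2}$. For $K\ge 3$ this term is strictly more singular than anything on the left and than any lower-degree term in $w$, so it cannot be balanced: we would be forced to conclude $w_K(\overleftrightarrow q(\tau))\,\pi_{-1}(\tau)^K\equiv 0$ on $[0,1]$. Since $\pi_{-1}$ cannot vanish identically — that is exactly what makes (\ref{eq:lim2}) have a finite \emph{nonzero} limit, using $u_1=1/m(q)$ bounded away from zero so that $\varepsilon\,u(\tilde q_\varepsilon,\tilde p_\varepsilon)\to \tfrac1{m(\overleftrightarrow q)}\pi_{-1}$ — and $w_K$, $\overleftrightarrow q$ are continuous, this forces $w_K\equiv 0$, a contradiction. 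Hence $w$, and therefore $\partial_q H$, is at most quadratic in $p$, so $H$ itself is at most quadratic in $p$ up to a $p$-independent additive function, which is the claim. I would also need to run the companion check on $\dot q=u(q,p)$ to confirm the $u$-side is consistent with degree $\le 2$ (it is: the term $\varepsilon\,u_k(\overleftrightarrow q)\pi_{-1}^k\varepsilon^{-k}$ stays bounded only for $k\le 1$ at leading order, with the $k=2$ piece handled because $\varepsilon\cdot\varepsilon^{-2}=\varepsilon^{-1}$ matches the $\varepsilon^{-1}$ from $\frac{1}{\varepsilon}\tilde q_\varepsilon'$ — actually one should double-check the bookkeeping here, since $\frac{d}{d\tau}\tilde q_\varepsilon=\varepsilon\,u$ has left side $O(1)$, forcing $u_2\pi_{-1}^2=0$ too, which would push $H$ down to \emph{linear} in $p$ unless one is careful about which equation pins $\pi_{-1}$).

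The main obstacle I anticipate is precisely this last bookkeeping subtlety: the configuration-space equation $\frac{d}{d\tau}\tilde q_\varepsilon = \varepsilon\,u(\tilde q_\varepsilon,\tilde p_\varepsilon)$ and the momentum equation $\frac{d}{d\tau}\tilde p_\varepsilon=\varepsilon\,w(\tilde q_\varepsilon,\tilde p_\varepsilon)$ carry different powers of $\varepsilon$ on the left, so I must be careful to use the \emph{momentum} equation (whose left side is $O(\varepsilon^{-2})$ at worst) to get the clean ``degree $\le 2$'' conclusion, and separately verify that the quadratic-in-$p$ case is genuinely admissible (it is — this is the standard $H=\frac{1}{2m}p^2+Ap+V$, for which the free-particle computation in the excerpt already exhibits a consistent secular solution). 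I also need to make sure the argument is not circular: the nonvanishing of $\pi_{-1}$ must be derived from (\ref{eq:lim2}) and Condition 1, not assumed. Once that is nailed down, matching the single most-singular order gives the proposition with essentially no further computation.
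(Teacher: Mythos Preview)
Your detour through the momentum equation has a real gap. From $\frac{d}{d\tau}\tilde p_\varepsilon=\varepsilon\,w(\tilde q_\varepsilon,\tilde p_\varepsilon)$ you correctly get that $w=-\partial_q H$ must be at most quadratic in $p$. But that only says $H(q,p)=a_2(q)p^2+a_1(q)p+a_0(q)+g(p)$ for some \emph{arbitrary} function $g$ of $p$ alone. Your sentence ``so $H$ itself is at most quadratic in $p$ up to a $p$-independent additive function'' has the dependence backwards: the undetermined piece is $q$-independent but can depend on $p$ any way it likes, so the $p$-equation by itself does not give the proposition.

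The paper's proof, and what you half-rediscover in your parenthetical, is simply to use the position equation $\frac{d}{d\tau}\tilde q_\varepsilon=\varepsilon\,u(\tilde q_\varepsilon,\tilde p_\varepsilon)$ from the start. The left side is $O(1)$ (its leading term is $q_B-q_A$), while on the right $\varepsilon\,u_k(\overleftrightarrow q)\,\tilde p_\varepsilon^{\,k}$ is of order $\varepsilon^{1-k}$; this diverges for every $k\ge 2$, forcing $u_k\equiv 0$ for $k\ge 2$. Since $u=\partial_p H$, this makes $\partial_p H$ at most linear in $p$, hence $H$ at most quadratic in $p$ --- not linear. Your worry that ``$u_2\pi_{-1}^2=0$ would push $H$ down to linear in $p$'' conflates $u$ with $H$: killing the $p^2$ term in $u=\partial_p H$ kills the $p^3$ term in $H$, nothing more. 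And your attempted rescue (``$\varepsilon\cdot\varepsilon^{-2}=\varepsilon^{-1}$ matches the $\varepsilon^{-1}$ from $\tfrac{1}{\varepsilon}\tilde q_\varepsilon'$'') double-counts the $\varepsilon$: if you write the equation as $\tfrac{1}{\varepsilon}\tilde q_\varepsilon'=u$, the right side is $u$, not $\varepsilon u$, so the $k=2$ term is $O(\varepsilon^{-2})$ and still unmatched. Once the bookkeeping is straightened out, the $q$-equation alone gives the result in one line, with no need to invoke the $p$-equation at all.
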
 

\begin{proof}
From $\dot q = u(q,p)$ we find $\frac{d}{d \tau} \tilde q_\varepsilon  = \varepsilon \, u ( \tilde q_\varepsilon , \tilde p _\varepsilon ) \equiv \varepsilon\sum_n u(\tilde q_\varepsilon ) \, \tilde p_\varepsilon^n$. As $\tilde p_\varepsilon = O(\varepsilon^{-1})$ we see that the $n$th term will diverges as $\varepsilon \to 0^+$ if $n \ge 2$, therefore we must truncate to get $u (q,p) =u_0 (q) +u_1(q) \, p \equiv \frac{\partial}{\partial p} H(q,p)$ and so
    \begin{eqnarray}
        H(q,p) \equiv \frac{1}{2 } u_1 (q) p^ 2
        + u_0 (q) p +V(q) .
        \label{eq:quad_Ham}
    \end{eqnarray}
\end{proof}

We therefore restrict to Hamiltonians of the form $H(q,p) \equiv \frac{1}{2 m(q)} \, p^ 2 + u_0 (q) \, p +V(q) $. Our next order of business is to investigate the asymptotic behavior of the canonical equations of motion themselves. 

\begin{proposition}
    The secular expansion is valid if and only if the mass $m$ is constant. We also find that
    \begin{eqnarray}
        \pi_{-1} = m (q_B-q_A).
    \end{eqnarray}
\end{proposition}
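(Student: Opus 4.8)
The plan is to substitute the quadratic Hamiltonian (\ref{eq:quad_Ham}) into Hamilton's equations written in the rescaled time $\tau$, feed in the expansions posited in Conditions 2 and 3, and match powers of $\varepsilon$. With $u_1 = 1/m$ the Hamiltonian vector field is $u(q,p) = p/m(q) + u_0(q)$ and $w(q,p) = -\partial_q H(q,p) = \frac{m'(q)}{2m(q)^2}p^2 - u_0'(q)p - V'(q)$, so the rescaled equations of motion read
\[
\frac{d}{d\tau}\tilde q_\varepsilon = \varepsilon\,u(\tilde q_\varepsilon,\tilde p_\varepsilon),
\qquad
\frac{d}{d\tau}\tilde p_\varepsilon = \varepsilon\,w(\tilde q_\varepsilon,\tilde p_\varepsilon).
\]
I would Taylor-expand $u$, $w$ and their coefficient functions about $\overleftrightarrow{q}(\tau)$, using $\tilde q_\varepsilon = \overleftrightarrow{q} + O(\varepsilon)$ from Condition 2, together with $\varepsilon\,\tilde p_\varepsilon = \pi_{-1} + \varepsilon\pi_0 + \cdots$ and $\varepsilon\,\tilde p_\varepsilon^2 = \pi_{-1}^2\varepsilon^{-1} + 2\pi_{-1}\pi_0 + \cdots$ from (\ref{eq:secular}).

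Two leading-order relations do all the work. From the $\tilde q$ equation at order $\varepsilon^0$, using $\tfrac{d}{d\tau}\overleftrightarrow{q} = q_B - q_A$, matching the constant terms gives $q_B - q_A = \pi_{-1}(\tau)/m(\overleftrightarrow{q}(\tau))$, i.e.\ $\pi_{-1}(\tau) = m(\overleftrightarrow{q}(\tau))(q_B - q_A)$. From the $\tilde p$ equation at its leading order $\varepsilon^{-1}$, the left side contributes $\pi_{-1}'(\tau)\varepsilon^{-1}$ while the only $\varepsilon^{-1}$ term on the right is $\frac{m'(\overleftrightarrow{q})}{2m(\overleftrightarrow{q})^2}\pi_{-1}^2$, so $\pi_{-1}'(\tau) = \frac{m'(\overleftrightarrow{q}(\tau))}{2m(\overleftrightarrow{q}(\tau))^2}\pi_{-1}(\tau)^2$. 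Differentiating the first relation by the chain rule (again using $\tfrac{d}{d\tau}\overleftrightarrow{q} = q_B - q_A$) gives $\pi_{-1}'(\tau) = m'(\overleftrightarrow{q}(\tau))(q_B - q_A)^2$, whereas substituting it into the second gives $\pi_{-1}'(\tau) = \tfrac12 m'(\overleftrightarrow{q}(\tau))(q_B - q_A)^2$. Hence $m'(\overleftrightarrow{q}(\tau))(q_B - q_A)^2 = 0$, so $m'(\overleftrightarrow{q}(\tau)) = 0$ for every $\tau$; since $\tau \mapsto \overleftrightarrow{q}(\tau)$ sweeps the segment between $q_A$ and $q_B$ and $A,B$ are arbitrary, $m$ is constant, and then $\pi_{-1} = m(q_B - q_A)$.

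For the converse I would check that a constant mass genuinely admits a secular solution. When $m$ is constant, $w_2 = m'/2m^2 = 0$, so $w$ is linear in $p$ and the singular $\varepsilon^{-1}$ source in the $\tilde p$ equation disappears; matching successive powers of $\varepsilon$ then yields a triangular cascade of linear ordinary differential equations. Starting from the constant $\pi_{-1} = m(q_B - q_A)$, the $\tilde p$ equation determines each $\pi_n'$ — hence $\pi_n$ up to one integration constant — from already-computed lower-order data; the $\tilde q$ equation then yields $\chi_{n+1}'$, integrated with $\chi_{n+1}(0) = 0$; and the free constant in $\pi_n$ is pinned down by imposing $\chi_{n+1}(1) = 0$. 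This last step is uniquely solvable because the coefficient of $\pi_n(0)$ in $\chi_{n+1}'$ is $u_1 = 1/m$, which is nonzero (Condition 1), so the two-point conditions (\ref{eq:chi_end}) are met at every order and the expansion exists. I expect the only real difficulty to be organizational — keeping the Taylor bookkeeping about $\overleftrightarrow{q}$ straight so that the right terms land at orders $\varepsilon^{-1}$, $\varepsilon^0$, $\varepsilon^1$ — rather than conceptual; the ``only if'' is settled entirely by the two leading-order relations and the ``if'' by the triangularity and non-degeneracy of the cascade.
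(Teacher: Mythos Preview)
Your approach is essentially the paper's --- substitute the secular expansions into the rescaled Hamilton equations and read off the lowest-order relations --- and it is correct. There is one genuine difference worth flagging. The paper writes the $p$-equation as $\frac{d}{d\tau}\tilde p_\varepsilon = -\varepsilon\,u_0'(\tilde q_\varepsilon)\tilde p_\varepsilon - \varepsilon\,V'(\tilde q_\varepsilon)$, silently dropping the $-\tfrac12 u_1'(q)p^2 = \tfrac{m'}{2m^2}p^2$ contribution to $w$; this lets it conclude $\pi_{-1}'=0$ immediately at order $\varepsilon^{-1}$, and then $\pi_{-1}=m(\overleftrightarrow q)(q_B-q_A)$ from the $q$-equation forces $m$ constant. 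You keep that term, so your $O(\varepsilon^{-1})$ relation reads $\pi_{-1}'=\tfrac{m'}{2m^2}\pi_{-1}^2$, and the constancy of $m$ emerges instead from the factor-of-two inconsistency between this and the $\tau$-derivative of $\pi_{-1}=m(\overleftrightarrow q)(q_B-q_A)$. Your version is the more honest one: the paper's omission of the $p^2$ term in $w$ is precisely what it is trying to justify. You also sketch the converse (constant $m$ $\Rightarrow$ the secular cascade closes order by order), which the paper does not address; that sketch is sound, since with $w_2=0$ the hierarchy is triangular and the endpoint condition $\chi_{n+1}(1)=0$ fixes the single integration constant in $\pi_n$ because the coefficient $1/m$ is nonzero.
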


\begin{proof}
    Hamilton's $q$-equation reads as $\frac{d}{d \tau} \tilde q_\varepsilon  = \varepsilon \, \frac{1}{m} \tilde p _\varepsilon$. Inserting the expansions, we see that the $O(1)$ and $O(\varepsilon)$ terms yield
    \begin{eqnarray}
        q_B-q_A &=& \frac{1}{m} \, \pi_{-1} (\tau ),
        \label{eq:piminus} \\
        \frac{d}{d\tau} \chi_1 (\tau ) &=& \frac{1}{m}\, \pi_0 (\tau ),
        \label{eq:dchi1}
    \end{eqnarray}
    respectively. Hamilton's $p$-equation reads as $\frac{d}{d \tau} \tilde p_\varepsilon  = - \varepsilon u_0'(\tilde q_\varepsilon) \, \tilde p_\varepsilon- \varepsilon \,  V' ( \tilde q_\varepsilon )$ and again inserting the secular expansion we find that the $O(\varepsilon^{-1} )$ and $O(1)$ terms are
    \begin{eqnarray}
        \frac{d}{d \tau} \pi_{-1} (\tau ) =0 ,
        \label{eq:dpiminus}\\
        \frac{d}{d \tau} \pi_0 (\tau)= - u_0' (\overleftrightarrow{q}(\tau ))\, \pi_{-1}.
        \label{eq:dpi0}
    \end{eqnarray}
    From (\ref{eq:piminus}), we see that $\pi_{-1}= m (q_B-q_A)$ and from (\ref{eq:dpiminus}) this must be constant. As a result $m$ is constant. 
    \end{proof}

We see that we are restricted to the general class of Hamiltonian's describing a particle of constant mass in an electromagnetic potential. The term $u_0$ will carry dimensions of speed and would naturally arise from a vector potential contribution in the presence of a magnetic field. For clarity, we will drop this term for the remainder of this section and just consider Hamiltonians of the traditional form $H(q,p) = \frac{1}{2m}p^2+ V(q)$.

\begin{proposition}
    For $H(q,p) = \frac{1}{2m}p^2+ V(q)$. Then $\pi_0=0$ and $\chi_1=0$.
\end{proposition}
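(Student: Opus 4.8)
The plan is to use nothing more than the low-order terms of Hamilton's equations already extracted in the previous proposition together with the endpoint conditions (\ref{eq:chi_end}) on the $\chi_n$. Since we have now specialized to $u_0\equiv 0$, equation (\ref{eq:dpi0}) degenerates to $\frac{d}{d\tau}\pi_0(\tau)=0$, so that $\pi_0$ is independent of $\tau$; write $\pi_0(\tau)\equiv c$, where $c$ may still depend on the fixed data $q_A,q_B$. The whole proposition then reduces to showing $c=0$, after which $\chi_1$ vanishes automatically.

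The key step is to integrate the $O(\varepsilon)$ part of the $q$-equation, equation (\ref{eq:dchi1}), which now reads $\frac{d}{d\tau}\chi_1(\tau)=\frac{1}{m}\pi_0(\tau)=\frac{c}{m}$. Integrating gives $\chi_1(\tau)=\frac{c}{m}\tau+\chi_1(0)$, and the endpoint condition $\chi_1(0)=0$ removes the constant of integration, leaving $\chi_1(\tau)=\frac{c}{m}\tau$. The second endpoint condition $\chi_1(1)=0$ then forces $\frac{c}{m}=0$; since $m\neq 0$ (Condition 1, after Proposition establishing constancy of $m$), we get $c=0$, i.e. $\pi_0\equiv 0$, and consequently $\chi_1\equiv 0$.

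The only point that genuinely needs care — and the closest thing to an obstacle — is the order-counting: one must confirm that in the traditional Hamiltonian nothing else leaks into the $O(1)$ level of the $p$-equation or the $O(\varepsilon)$ level of the $q$-equation. For $\dot{\tilde p}_\varepsilon=-\varepsilon\,V'(\tilde q_\varepsilon)$, Taylor-expanding $V'(\overleftrightarrow q+\varepsilon\chi_1+\cdots)$ shows the right-hand side is $O(\varepsilon)$, so it cannot contribute at $O(1)$ and $\frac{d}{d\tau}\pi_0=0$ is indeed the full $O(1)$ equation. For $\dot{\tilde q}_\varepsilon=\frac{\varepsilon}{m}\tilde p_\varepsilon$ with $\tilde p_\varepsilon=\pi_{-1}\varepsilon^{-1}+\pi_0+\cdots$, the coefficient of $\varepsilon^1$ is exactly $\frac{1}{m}\pi_0$, with no further contributions. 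With that check in place the argument is entirely elementary, and I do not anticipate any deeper difficulty; the real content is simply that the two boundary conditions on $\chi_1$ over-determine a function whose derivative is constant, forcing that constant to be zero.
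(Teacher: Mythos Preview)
Your proof is correct and follows essentially the same route as the paper: both use $u_0\equiv 0$ in (\ref{eq:dpi0}) to get $\pi_0$ constant, then exploit the endpoint conditions on $\chi_1$ via (\ref{eq:dchi1}) to force that constant to zero. The only cosmetic difference is that the paper phrases the key step as ``$\int_0^1 \pi_0\,d\tau=0$ so the constant vanishes'' whereas you integrate explicitly to $\chi_1(\tau)=\tfrac{c}{m}\tau$ and read off $c=0$ from $\chi_1(1)=0$; these are the same argument.
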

\begin{proof}
    It now follows from (\ref{eq:dpi0}) that $\frac{d}{d\tau} \pi_0 =0$ so $\pi_0$ is constant. But since $\int_0^1 pi_0\, d \tau =0$ we deduce that the constant value is zero.

    Likewise, from (\ref{eq:dchi1}) we now have $\frac{d}{d\tau}\chi_1 \equiv 0$. Therefore, $\chi_1$ is also constant, and by the endpoint conditions (\ref{eq:chi_end}).
\end{proof}
%%%%%%%%%%%%%%%%%%%%%%%%%%%%%%%%%%%%%%%%%%%%%%%%%%

\begin{proposition}
    Taking the Hamiltonian now to be $H(q,p) = \frac{1}{2m}p^2+ V(q)$, we have that
    \begin{eqnarray}
        \frac{d}{d\tau} \chi_n (\tau ) = \frac{1}{m}\pi_{n-1}, \qquad (n=1,2,3, \cdots ).
        \label{eq:chi_pi}
    \end{eqnarray}
    This implies $\int_0^1 \pi_n (\tau ) \, d\tau =0$ and, in particular, that $\pi_2, \pi_4 \equiv 0$ and $\chi_3 \equiv 0$. 
\end{proposition}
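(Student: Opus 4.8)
The plan is to substitute the secular forms of Conditions 2 and 3 into Hamilton's two equations for $H = \frac{1}{2m}p^2 + V(q)$ and match coefficients of powers of $\varepsilon$. For the $q$-equation, $\frac{d}{d\tau}\tilde q_\varepsilon = \frac{\varepsilon}{m}\tilde p_\varepsilon$, the left-hand side reads $(q_B-q_A) + \sum_{n\ge 1}\varepsilon^n\chi_n'(\tau)$ (using $\frac{d}{d\tau}\overleftrightarrow{q} = q_B-q_A$), while the right-hand side is $\frac{1}{m}\sum_{n\ge -1}\pi_n(\tau)\,\varepsilon^{n+1} = \frac{1}{m}\sum_{k\ge 0}\pi_{k-1}(\tau)\,\varepsilon^k$. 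The $\varepsilon^0$ terms recover $q_B-q_A = \pi_{-1}/m$ (already established), and for $n\ge 1$ the coefficient of $\varepsilon^n$ gives precisely (\ref{eq:chi_pi}). Integrating (\ref{eq:chi_pi}) over $\tau\in[0,1]$ and invoking the endpoint conditions (\ref{eq:chi_end}), the left-hand side vanishes, so $\int_0^1\pi_{n-1}\,d\tau = 0$ for all $n\ge 1$; equivalently $\int_0^1\pi_n\,d\tau = 0$ for every $n\ge 0$.

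To obtain the vanishing of $\pi_2$, $\pi_4$ and $\chi_3$ I would then exploit the $p$-equation $\frac{d}{d\tau}\tilde p_\varepsilon = -\varepsilon\,V'(\tilde q_\varepsilon)$. Writing $\tilde q_\varepsilon = \overleftrightarrow{q} + R_\varepsilon$ with $R_\varepsilon = \sum_{n\ge 1}\varepsilon^n\chi_n$ and Taylor-expanding, $V'(\tilde q_\varepsilon) = V'(\overleftrightarrow{q}) + V''(\overleftrightarrow{q})R_\varepsilon + \frac{1}{2} V'''(\overleftrightarrow{q})R_\varepsilon^2 + \cdots$, so $-\varepsilon V'(\tilde q_\varepsilon)$ is a power series in $\varepsilon$ beginning at order $\varepsilon^1$. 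Matching against $\frac{d}{d\tau}\tilde p_\varepsilon = \sum_{n\ge -1}\pi_n'\,\varepsilon^n$: the $\varepsilon^{-1}$ and $\varepsilon^0$ coefficients give $\pi_{-1}'=0$ and $\pi_0'=0$ (consistent with the earlier propositions), and in general, for $n\ge 1$, the $\varepsilon^n$ coefficient expresses $\pi_n'$ as a combination of derivatives of $V$ at $\overleftrightarrow{q}$ multiplied by monomials $\chi_{k_1}\cdots\chi_{k_j}$ with $k_1+\cdots+k_j = n-1$.

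The conclusions then follow by bootstrapping on the vanishing corrections already known. Since $\chi_1\equiv 0$ by the preceding proposition, the $\varepsilon^2$-coefficient of $-\varepsilon V'(\tilde q_\varepsilon)$, namely $-V''(\overleftrightarrow{q})\chi_1$, is zero, so $\pi_2'=0$; combined with $\int_0^1\pi_2\,d\tau = 0$ this gives $\pi_2\equiv 0$. Substituting into (\ref{eq:chi_pi}) with $n=3$ yields $\chi_3' = \pi_2/m = 0$, hence $\chi_3$ is constant, and the endpoint conditions (\ref{eq:chi_end}) force $\chi_3\equiv 0$. Finally, the $\varepsilon^4$-coefficient of $-\varepsilon V'(\tilde q_\varepsilon)$ is built from $V''(\overleftrightarrow{q})\chi_3$, $V'''(\overleftrightarrow{q})\chi_1\chi_2$ and $V''''(\overleftrightarrow{q})\chi_1^3$, all of which vanish because $\chi_1 = \chi_3 = 0$; thus $\pi_4'=0$, and $\int_0^1\pi_4\,d\tau = 0$ gives $\pi_4\equiv 0$.

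The only point requiring care is the organization of the higher chain-rule (Fa\`{a} di Bruno) expansion of $V'(\tilde q_\varepsilon)$ --- keeping track of which products of the $\chi_k$ contribute at each order in $\varepsilon$ --- together with the termwise differentiation and coefficient-matching that Conditions 2 and 3 license. For the low-order assertions here only the first few terms are in play, so I anticipate no real obstacle; the structural fact this brings out is a parity pattern, the odd corrections $\chi_1,\chi_3,\dots$ and the even momentum coefficients $\pi_0,\pi_2,\dots$ all vanishing, which is exactly what powers the ``in particular'' part of the statement.
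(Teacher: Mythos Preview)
Your proposal is correct and follows essentially the same route as the paper: expand both Hamilton equations in the secular series, read off (\ref{eq:chi_pi}) from the $q$-equation, use the endpoint conditions to get the vanishing integrals, and then bootstrap through the $p$-equation using $\chi_1\equiv 0$ to kill $\pi_2$, then $\chi_3$, then $\pi_4$. The paper's proof is marginally terser in that it substitutes $\chi_1=0$ at the outset when Taylor-expanding $V'(\tilde q_\varepsilon)$, so the $\varepsilon^2$ term is simply absent rather than computed and seen to vanish, but the logic is identical.
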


\begin{proof}
    The position Hamilton's equation yields
    \begin{eqnarray}
        \frac{d}{dt}\tilde q_\varepsilon &=& \varepsilon\, \frac{1}{m} \, \tilde p_\varepsilon \equiv (q_B-q_A) +\varepsilon\, \frac{1}{m} \pi_1 
        + \varepsilon^2 \, \frac{1}{m}\pi_2  + \cdots 
    \end{eqnarray}
    and, if we compare this to $\frac{d}{dt}\tilde q_\varepsilon \equiv (q_B-q_A)  + \varepsilon^2 \frac{d}{dt}\chi_2 +  \varepsilon^3 \frac{d}{dt}\chi_3 +\cdots$, we obtain the relations (\ref{eq:chi_pi}). The property that $\int_0^1 \pi_n (\tau ) \, d\tau =0$ follows from the same argument used in the $n=0$ case.

    Likewise, the momentum Hamilton's equation yields
    \begin{eqnarray*}
        \frac{d \tilde p_\varepsilon}{d\tau}  &=& -\varepsilon \, V' ( \tilde q_\varepsilon ) 
        \\
        &\equiv& -\varepsilon \, V' ( \overleftrightarrow{q}  )
        -\varepsilon^3 \, V'' ( \overleftrightarrow{q}  )\chi_2    -\varepsilon ^4  V'' ( \overleftrightarrow{q}  ) \chi_3 + O(\varepsilon^5).
    \end{eqnarray*}
    We also have $\frac{d}{dt}\tilde p_\varepsilon \equiv \frac{1}{\varepsilon} \, \frac{d}{d\tau}\pi_{-1} + \frac{d}{d\tau } \pi_0  + \varepsilon \frac{d}{dt}\pi_1 + \cdots$ (the first and second term are now known to vanish) and comparing the coefficients of the powers of $\varepsilon$ yields
    \begin{eqnarray}
        \frac{d}{d\tau} \pi_1 &=&- \, V' ( \overleftrightarrow{q}  ) ,\\
        \frac{d}{d\tau} \pi_2 &=& 0,  \\
        \frac{d}{d\tau} \pi_3 &=& -V'' ( \overleftrightarrow{q}  )\chi_2  ,\\
        \frac{d}{d\tau} \pi_4 &=& - V'' ( \overleftrightarrow{q}  ) \chi_3 .
    \end{eqnarray}
    The development becomes more complicated beyond this point as the Taylor expansion still involves powers of sequences in $\varepsilon$. From the fact that $\pi_2$ is clearly constant, and its average over [0,1] vanishes, it follows that $\pi_2 = 0$. As $\frac{d}{d\tau}\chi_3= \pi_2/m \equiv 0$ with $\chi_3(0)=0$, we conclude that $\chi_3 \equiv 0$ as well. This in turn implies that $\frac{d }{d \tau} \pi_4=0$ which forces $\pi_4=0$.
\end{proof}

%%%%%%%%%%%%%%%%%%%%%%%%%%%%%%%%%%%%%%%%%%%
%%%%%%%%%%%%%%%%%%%%%%%%%%%%%%%%%%%%%%%%%%%%
\subsection{Specific Examples}
%%%%%%%%%%%%%%%%%%%%%%%%%%%%%%%%%%%
\subsubsection{Particle Experiencing a Constant Force}
We take the Hamiltonian to be $H^F(q,p) = \frac{p^2}{2m}-qF$ in which case the exact solution to Hamilton's equations is
\begin{eqnarray}
    q^F(t)=q_A+\frac{p_A}{m} (t-t_A) + \frac{F}{2m} (t-t_A)^2, \quad p^F(t) =p_A +F (t-t_A) ,
\end{eqnarray}
in terms of the initial data $(q_A,p_A)$. The solution may be re-expressed in terms of the endpoint conditions and here the datum $p_A$ should be replaced by $p_A= m(q_B-q_A)/(t_B-t_A)- \frac{1}{2}F (t_B-t_A)$. Rendering the solution in terms of the parameter $\tau$ then yields
\begin{eqnarray}
    \tilde q _\varepsilon^F (\tau ) = \overleftrightarrow{q}(\tau ) - \frac{1}{2}\frac{F}{m} (\tau -\tau^2 ) \varepsilon^2, \,
    \tilde p^F (\tau ) = m (q_B-q_A) \varepsilon^{-1}
    +F(\tau - \frac{1}{2} \tau ) \, \varepsilon .
\end{eqnarray}
Condition 2 is satisfied with $\chi_2^F (\tau ) = - \frac{1}{2}\frac{F}{m} (\tau -\tau^2 )  $ and $\chi_n \equiv 0$ otherwise. We also have condition 3 satisfied with
\begin{eqnarray}
    \pi_{-1}^F = m (q_B-q_A) , \quad\pi_{1} ^F(\tau ) = F(\tau - \frac{1}{2}  ) , \quad \pi_n^F \equiv 0 \,\, \,  \mathrm {otherwise}.
\end{eqnarray}

The classical action for this problem can be calculated exactly and is given by
\begin{eqnarray}
    S_{\text{cl.}}^F (B|A) &=&  S_{\text{cl.}}^{\mathrm{free}} (B|A)
    + \frac{1}{2} F (q_A+q_B) (t_B-t_A ) - \frac{1}{24} \frac{F^2}{m} (t_B-t_A)^3 \nonumber \\
    &=& \frac{1}{2}m (q_B-q_A)^2 \, \varepsilon^{-1} 
    +\frac{1}{2} F (q_A+q_B) \, \varepsilon - \frac{1}{24} \frac{F^2}{m} \, \varepsilon^3.
\end{eqnarray}

%%%%%%%%%%%%%%%%%%%%%%%%%%%%%%%%%%%
\subsubsection{The Harmonic Oscillator}
The oscillator Hamiltonian is $H^{\text{osc.}}(q,p) = \frac{p^2}{2m}+ \frac{1}{2}m\omega^2 q^2$ and again the exact solution is well known.
Rendered in terms of the parameter $\tau$, this will be
\begin{eqnarray}
    \tilde q^{\text{osc.}} _\varepsilon (\tau )
    &=& q_A \cos ( \omega \tau \varepsilon  ) + \frac{q_B-q_A \cos ( \omega \varepsilon)}{ \sin ( \omega \varepsilon )}
    \sin ( \omega \tau \varepsilon), \nonumber \\
    \tilde p ^{\text{osc.}}(\tau ) &=& - m\omega ^2 q_A \sin ( \omega \tau \varepsilon  ) + m \omega\frac{q_B-q_A \cos ( \omega \varepsilon)}{ \sin ( \omega \varepsilon )}
    \cos ( \omega \tau \varepsilon)
        .
\end{eqnarray}
The asymptotic behavior for small $\varepsilon$ is
\begin{eqnarray}
    \tilde q^{\text{osc.}} _\varepsilon (\tau )
    &=& \overleftrightarrow{q}(\tau ) + \chi_2 ^{\text{osc.}} (\tau ) \, \varepsilon^2
    + \chi_4 ^{\text{osc.}} (\tau ) \, \varepsilon^4 + \cdots ,
    \nonumber \\
    \tilde p ^{\text{osc.}}(\tau ) &=& m (q_B-q_A ) \, \varepsilon^{-1} + \pi_1 ^{\text{osc.}}(\tau ) \varepsilon + \pi_3 ^{\text{osc.}}(\tau ) \varepsilon ^3 + \cdots.
\end{eqnarray}
We see that Conditions 2 and 3 are satisfied. Only the even terms $\chi^{\text{osc.}}_n $ and odd terms $\pi_n ^{\text{osc.}}$ are non-zero with the lowest terms being
\begin{eqnarray}
    \chi^{\text{osc.}}_2 (\tau ) &=& \frac{1}{2} \omega^2 q_A \, \tau (1-\tau^2) 
    + \frac{1}{6} \omega^2 q_B \tau (1-\tau^3) ,
    \nonumber \\
    \pi^{\text{osc.}}_1 (\tau ) &=& 
    m \omega^2 \left( \frac{2q_A +q_B}{6} - q_A \tau - \frac{1}{2} ( q_B -q_A) \tau^2 \right).
\end{eqnarray}

The classical action for the oscillator is also known exactly and is given by
\begin{eqnarray*}
    S_{\text{cl.}}^{\text{osc.}} (B|A) &=&  \frac{m \omega}{2 \sin (\omega (t_B-t_A))}
    \bigg\{ (q_A^2 +q_B^2) \cos (\omega (t_B-t_A)) -2 q_Aq_B\bigg\}.
\end{eqnarray*}
This admits the asymptotic expansion
\begin{eqnarray}
    S_{\text{cl.}}^{\text{osc.}} (B|A) &=&
\frac{m}{2} (q_B-q_A)^2 \, \varepsilon^{-1} 
    -\frac{m \omega^2 }{6} (q_A^2 +q_B^2 +q_Aq_B ) \, \varepsilon + O(  \varepsilon^3).
\end{eqnarray}

%%%%%%%%%%%%%%%%%%%%%%%%%%%%%%%%%
%%%%%%%%%%%%%%%%%%%%%%%%%%%%%%
%%%%%%%%%%%%%%%%%%%%%%%%%%%%%%%%%%%%%%
%%%%%%%%%%%%%%%%%%%%%%%%%%%%%%%%%%%%%%%%%
%%%%%%%%%%%%%%%%%%%%%%%%%%%%%%%%%%%%%%%%%%
%%%%%%%%%%%%%%%%%%%%%%%%%%%%%%%%%%%%%%%%%%%%%%%

\subsection{Short-Time Asymptotics of the Classical Action}
We now give the general case where the secular expansion is valid. Let us take this opportunity to collect all our observations together so far. We deal with Hamiltonians of the form $H= \frac{1}{2m} p^2 + V (q)$ and expansions
\begin{eqnarray}
    \tilde q_\varepsilon &=& \overleftrightarrow{q}+ \varepsilon^2 \chi_2 + \varepsilon^4 \chi_4 + \varepsilon^5 \chi_5 + \cdots ,\\
    \tilde p _ \varepsilon &=& \frac{1}{\varepsilon}\pi_{-1} + \varepsilon \pi_1 + \varepsilon^3 \pi_3 + \varepsilon^5 \pi_5 + \cdots ,
\end{eqnarray}
with $\pi_{-1}= m (q_B-q_A)$, $d \chi_{n+1}= \frac{1}{m}\pi_n \, d \tau$, $ \frac{d}{d\tau}\pi_1 = - V'( \overleftrightarrow{q})$ and $ \frac{d}{d\tau}\pi_3 = - V''( \overleftrightarrow{q})$.

In the following, we will use the notation
\begin{eqnarray}
    \overline{f} \triangleq 
    \int_0^1 f \big( \overleftrightarrow{q} (\tau )\big) \, d \tau =
    \int_0^1 f \big( (1-\tau ) q_A +\tau q_B \big) \, d\tau.
\end{eqnarray}

\begin{theorem}
    Under Conditions 1,2, and 3, the classical action associated with Hamiltonian $H(q,p) = \frac{1}{2m}p^2+ V(q)$ has the short time asymptotic behavior
    \begin{eqnarray}
    S_{\text{cl.}}(B|A) &=& \varepsilon^{-1} 
         \frac{1}{2} m (q_B-q_A)^2 
         - \varepsilon\,\overline{V} 
         -\varepsilon ^3 \frac{1}{2m(q_B-q_A)^2}    \bigg( \overline{V^2}-\overline{V}^2 \bigg) 
         \nonumber \\
         &&   +\varepsilon ^5 \frac{1}{m(q_B-q_A)^2}    \bigg( \overline{VF}-\overline{V}\, \, \overline{F} \bigg) 
         +O({\varepsilon^6}) ,
         \label{eq:S_expand}
    \end{eqnarray}
    where $F = - V'$ is the conservative force derived from the potential $V$.
\end{theorem}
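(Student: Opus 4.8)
The plan is to evaluate $S_{\mathrm{cl.}}(B|A)$ directly on the classical phase-space path in the rescaled time $\tau$, feeding in the secular expansions collected just before the statement. First I would put the action into a convenient form. Since $L(q,\dot q)=\tfrac12 m\dot q^2-V(q)$ and, by Hamilton's $q$-equation, $\tfrac{d\tilde q_\varepsilon}{d\tau}=\varepsilon\dot q=\tfrac{\varepsilon}{m}\tilde p_\varepsilon$, the kinetic term becomes $\tfrac{m}{2\varepsilon}\big(\tfrac{d\tilde q_\varepsilon}{d\tau}\big)^2=\tfrac{\varepsilon}{2m}\tilde p_\varepsilon^2$, so that
\[
S_{\mathrm{cl.}}(B|A)=\varepsilon\int_0^1\Big(\tfrac{1}{2m}\,\tilde p_\varepsilon(\tau)^2-V\big(\tilde q_\varepsilon(\tau)\big)\Big)\,d\tau .
\]
Then I would substitute $\tilde q_\varepsilon=\overleftrightarrow q+\varepsilon^2\chi_2+\varepsilon^4\chi_4+\cdots$ and $\tilde p_\varepsilon=\varepsilon^{-1}\pi_{-1}+\varepsilon\pi_1+\varepsilon^3\pi_3+\cdots$, square the momentum series, Taylor-expand $V(\tilde q_\varepsilon)$ about $\overleftrightarrow q$, and collect powers of $\varepsilon$. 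Because the odd $\chi$'s and the even $\pi$'s vanish for $H=\tfrac1{2m}p^2+V$, both $\tilde p_\varepsilon^2$ and $V(\tilde q_\varepsilon)$ are even series in $\varepsilon$, so $S_{\mathrm{cl.}}$ picks up only the powers $\varepsilon^{-1},\varepsilon,\varepsilon^3,\varepsilon^5,\dots$.

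The mechanism that collapses each coefficient to a chord-average is a single integration-by-parts identity. The $\varepsilon^{-1}$ coefficient is $\tfrac1{2m}\int_0^1\pi_{-1}^2\,d\tau=\tfrac12 m(q_B-q_A)^2$ using $\pi_{-1}=m(q_B-q_A)$. At every higher order the cross terms $\pi_{-1}\pi_{2k+1}$ arising from $\tilde p_\varepsilon^2$ integrate to zero because $\int_0^1\pi_n\,d\tau=0$; in particular this already knocks the $\varepsilon$-coefficient down to $-\overline V$. For what remains one uses $V'(\overleftrightarrow q)=-\pi_1'$ together with $\chi_{n+1}'=\pi_n/m$ and the endpoint conditions $\chi_n(0)=\chi_n(1)=0$, so that $\int_0^1 V'(\overleftrightarrow q)\,\chi_{2k}\,d\tau=\int_0^1\pi_1\,\chi_{2k}'\,d\tau=\tfrac1m\int_0^1\pi_1\pi_{2k-1}\,d\tau$, which cancels the matching kinetic term (fully for $k\ge2$; at order $\varepsilon^3$ the partial cancellation leaves exactly $-\tfrac1{2m}\int_0^1\pi_1^2\,d\tau$). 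Solving $\pi_1'=-V'(\overleftrightarrow q)$ with $\int_0^1\pi_1=0$ gives $\pi_1=\big(\overline V-V(\overleftrightarrow q)\big)/(q_B-q_A)$, hence $\int_0^1\pi_1^2\,d\tau=\big(\overline{V^2}-\overline{V}^{2}\big)/(q_B-q_A)^2$, which produces the stated $\varepsilon^3$ term.

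For the $\varepsilon^5$-term I would carry the same bookkeeping one step further: after cancelling $\int_0^1 V'(\overleftrightarrow q)\chi_4\,d\tau$ against the $\pi_1\pi_3$ kinetic cross term and discarding $\pi_{-1}\pi_5$, the only survivor is $-\tfrac12\int_0^1 V''(\overleftrightarrow q(\tau))\,\chi_2(\tau)^2\,d\tau$, with $\chi_2$ the endpoint-vanishing primitive of $\pi_1/m$. Two further integrations by parts — writing $V''(\overleftrightarrow q)=-(q_B-q_A)^{-1}\tfrac{d}{d\tau}F(\overleftrightarrow q)$ with $F=-V'$, and using $\chi_2'=\pi_1/m$, $\pi_1'=F(\overleftrightarrow q)$, all boundary contributions vanishing — recast this as a chord-average combination of $V$ and $F$ (equivalently a third central moment of $V$ along the segment, up to constants), giving the fifth-order coefficient in the theorem. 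I would then sanity-check the whole expansion against the exact constant-force and harmonic-oscillator actions listed above, whose Taylor series in $\varepsilon$ must reproduce every coefficient; the oscillator in particular pins down the $\varepsilon^5$ term and confirms, via the even/odd parity, that the next correction is in fact $O(\varepsilon^7)$, better than the stated $O(\varepsilon^6)$.

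The main obstacle is organizational rather than conceptual: one has to enumerate precisely which products $\pi_j\pi_l$ (from squaring the momentum series) and $\chi_i\chi_j$ (from the Taylor expansion of $V$) occur at order $\varepsilon^{2k}$ in the integrand, and verify that every term not already of ``moment'' type is annihilated either by the mean-zero property $\int_0^1\pi_n\,d\tau=0$ or by one of the integration-by-parts cancellations. The one genuinely delicate step is massaging the $\varepsilon^5$ contribution $-\tfrac12\int_0^1 V''(\overleftrightarrow q)\chi_2^2\,d\tau$ into the compact closed form quoted, which needs the two extra integrations by parts and the explicit substitution for $\pi_1$.
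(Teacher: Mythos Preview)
Your approach is essentially the paper's: substitute the secular expansions into the action and collect powers of $\varepsilon$, using $\chi_{n+1}'=\pi_n/m$, $\pi_1'=-V'(\overleftrightarrow q)$ and $\int_0^1\pi_n\,d\tau=0$ to simplify. Writing the action in the on-shell Lagrangian form $\varepsilon\int_0^1\big(\tfrac{1}{2m}\tilde p_\varepsilon^{\,2}-V(\tilde q_\varepsilon)\big)\,d\tau$ rather than the paper's canonical form $\int\tilde p_\varepsilon\,d\tilde q_\varepsilon-\varepsilon\int H\,d\tau$ is a harmless (and slightly cleaner) variant, since the two agree identically on the classical trajectory. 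Through $O(\varepsilon^3)$ your bookkeeping matches the paper's line for line.

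The gap is at $O(\varepsilon^5)$. Your cancellation of $-\int V'(\overleftrightarrow q)\chi_4$ against $\tfrac1m\int\pi_1\pi_3$ is correct, and the survivor really is $-\tfrac12\int_0^1 V''(\overleftrightarrow q)\,\chi_2^2\,d\tau$. The two integrations by parts you describe also go through, but they produce
\[
-\tfrac12\int_0^1 V''(\overleftrightarrow q)\,\chi_2^2\,d\tau
=\frac{1}{2m^2(q_B-q_A)}\int_0^1\pi_1^{3}\,d\tau
=-\frac{1}{2m^2(q_B-q_A)^4}\,\overline{(V-\overline V)^3},
\]
i.e.\ exactly the third central moment you mention parenthetically, \emph{not} the printed coefficient $\tfrac{1}{m(q_B-q_A)^2}\big(\overline{VF}-\overline V\,\overline F\big)$. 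The two are genuinely different: the printed form is dimensionally inconsistent with the lower-order terms, and your own oscillator sanity check would reveal it immediately, since the exact oscillator action has an $\varepsilon^5$-coefficient proportional to $m\omega^6$ whereas the printed formula scales as $m\omega^4$. (The paper's own derivation at this order both drops the $V'\chi_4+\tfrac12 V''\chi_2^2$ contribution from the $H$-part and uses $\pi_3'=-V''(\overleftrightarrow q)$ rather than the correct $\pi_3'=-V''(\overleftrightarrow q)\chi_2$ from Proposition~7.) So your plan is sound; the only step that fails is the final identification with the displayed $\varepsilon^5$ term, and what your argument actually yields is the correct one. Your remark that the remainder is $O(\varepsilon^7)$ rather than $O(\varepsilon^6)$ is also correct.
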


\begin{proof}
    The classical action may be written as 
    \begin{eqnarray}
        S_{\text{cl.}} (B|A) = \int_{t_A}^{t_B} \big( pdq - H(q,p)dt\big)
        =  \int_0^1 \big( \tilde p_\varepsilon d\tilde q_\varepsilon -\varepsilon\,  H (\tilde q_\varepsilon , \tilde p _\varepsilon ) \, d\tau \big)
    \end{eqnarray}
    and we split this into two parts. The first part $ \int_0^1 \tilde p_\varepsilon (\tau ) d \tilde q_\varepsilon (\tau )$ is
    \begin{eqnarray*}
       \int_0^1 \bigg(\frac{\pi_{-1}}{\varepsilon} +  \pi_1 \varepsilon + \pi_3 \varepsilon^3 +\cdots \bigg)
        \bigg( (q_B-q_A)  \, d\tau+ \varepsilon^2 d \chi_2 + \varepsilon^4 d \chi_4  \cdots \bigg)
    \end{eqnarray*}
but we encounter terms of the form $(q_B-q_A)\pi_{-1} \int_0^1\pi_n d \tau$ and $\pi_{-1}\int_0^1 d\chi_n$ which vanish. This leaves
\begin{eqnarray}
     \int_0^1 \tilde p_\varepsilon (\tau ) d \tilde q_\varepsilon (\tau ) = \frac{1}{\varepsilon}\frac{(q_B-q_A)^2}{m}+\varepsilon^3\int_0^1 \pi_1 d \chi_2 +
     \varepsilon^5 \int_0^1 \big( \pi_3 d \chi_2+ \pi_1 d\chi_4 \big) + O(\varepsilon^6).
\end{eqnarray}

The second part is
\begin{eqnarray*}
         \varepsilon\int_0^1 H ( \tilde q_\varepsilon , \tilde p_\varepsilon ) \, d\tau & = & \varepsilon\int_0^1
        \bigg\{ \frac{1}{2m}  \tilde p_\varepsilon^2 +V (\overleftrightarrow{q} + \varepsilon^2 \chi_2 + \cdots) \bigg\} d \tau
        \nonumber \\
        &=& \varepsilon^{-1} 
         \frac{1}{2} m (q_B-q_A)^2 
         + \varepsilon\int_0^1  V (\overleftrightarrow{q} ) \, d\tau 
         \nonumber \\
         && \qquad +\varepsilon ^3 \int_0^1 \bigg( \frac{\pi_1^2}{2m} 
         + V' (\overleftrightarrow{q} ) \chi_2 (\tau ) \bigg) d\tau\nonumber \\
         && \qquad +\varepsilon ^5 \int_0^1  \frac{\pi_1\pi_3}{m} d\tau
          +O(\varepsilon^6)  .
\end{eqnarray*}

Combining the two terms leads to
\begin{eqnarray}
    S_{\text{cl.}}(B|A) &=& \varepsilon^{-1} 
         \frac{1}{2} m (q_B-q_A)^2 
         - \varepsilon\int_0^1 V (\overleftrightarrow{q} ) \, d\tau 
         \nonumber \\
         && \qquad +\varepsilon ^3 \int_0^1 \bigg( 
         \pi_1 d\chi_2 -\frac{\pi_1^2}{2m} d\tau
         -  V' (\overleftrightarrow{q} ) \chi_2 (\tau )  \, d\tau 
        \bigg)\nonumber \\
         && \qquad +\varepsilon ^5 \int_0^1 \bigg( 
         \pi_1 d \chi_4 +\pi_3 d \chi_2 -\frac{\pi_1\pi_3}{m} d\tau \bigg)
         +O({\varepsilon^6}) .
\end{eqnarray}

For the $O(\varepsilon^3)$ term, we observe that
$\int_0^1 \pi_1 \, d \chi_2 =\int_0^1 \frac{\pi_1^2}{m}d\tau$ and 
\begin{eqnarray*}
    \int_0^1 \big( -V' (\overleftrightarrow{q} ) \big) \,  \chi_2 d\tau \ =\int_0^1 \big( \frac{d\pi_1}{d\tau} \big) \,  \chi_2\, d \tau = -\int_0^1 d \chi_2\,\pi_1  = - \int_0^1 \frac{\pi_1^2}{m}d\tau.
\end{eqnarray*} 
The full order 3 contribution is therefore $- \varepsilon^3\int_0^1 \frac{\pi_1^2}{2m}d\tau.$

Starting from $\frac{d}{d\tau} \pi_1 = - V' (\overleftrightarrow{q}) $, we have
\begin{eqnarray*}
    \pi_1 (\tau ) &=&-\int_0^\tau V' \big( (1-\sigma ) q_A + \sigma q_B \big) \, d \sigma + \pi_1 (0) \\
    &=&- \frac{1}{(q_B-q_A )} \bigg( V \big( (1- \tau ) q_A + \tau q_B \big) - V_A\bigg) + \pi_1 (0).
\end{eqnarray*}
From $ \int_0^1 \pi_1 (\tau ) \, d \tau -0$, we see that $\pi_1(0) = (q_B-q_A) [\overline{V} -V(q_A)]$ and therefore
\begin{eqnarray}
    \pi_1 (\tau ) = -\frac{1}{(q_B-q_A)}\bigg( V \big( (1- \tau ) q_A + \tau q_B \big) - \overline{V} \bigg).
\end{eqnarray}
We also note that $\pi_1(1) =  -(q_B-q_A)^{-1}\bigg( V ( q_B) - \overline{V} \bigg) $. 
We therefore have
\begin{eqnarray*}
    \frac{1}{2m}\int_0^1 \pi_1^2 \, d\tau =
    \frac{1}{2m(q_B-q_A)^2} (\overline{V^2}-\overline{V}^2 ).
\end{eqnarray*}

Similarly, for the $O(\varepsilon^5 )$ term we note that $\int_0^1\pi_3\, d \chi_2 = \int_0^1 \pi_1 \, d \chi_4 = \int_0^1 \frac{\pi_1 \pi_3}{m} \, d\tau$. An almost identical argument as before leads to the expression
\begin{eqnarray}
    \pi_3 (\tau ) = -\frac{1}{(q_B-q_A)}\bigg( V' \big( (1- \tau ) q_A + \tau q_B \big) - \overline{V'} \bigg).
\end{eqnarray}
This leads to the form stated in the Theorem.
\end{proof}

\bigskip

Our expansion (\ref{eq:S_expand}) agrees with the the one obtained by Makri and Miller \cite{Makri_Miller_88}, though we are also able to give the $O(\varepsilon^5)$ explicitly.

The $O(\varepsilon )$ contribution involves the uniform average of the potential over the linear trajectory between the endpoints $q_A$ to $q_B$. The $O(\varepsilon^3)$ term then involves the variance of the potential, while the $O(\varepsilon^5)$ involves the covariance of the potential and the force it derives.
%%%%%%%%%%%%%%%%%%%%%%%%%%%%%%%%%%%%%%%%%%%%%%%%%%%%
%%%%%%%%%%%%%%%%%%%%%%%%%%%%%%%%%%%%%%%%%%%%%%%%%
%%%%%%%%%%%%%%%%%%%%%%%%%%%%%%%%%%%%%%%%%%%%%%%%%%
\section{Magnetic Hamiltonians}
We will now give the expansion when we include the magnetic term $u_0$. The Hamiltonian is now $H= \frac{1}{2m}p^2+u_0(q)p+V(q)$ and we find
$\frac{d}{d\tau} \tilde q_\varepsilon = \frac{\varepsilon}{m}\tilde p_\varepsilon+ \varepsilon \, u_0 (\tilde q_\varepsilon )$ and $\frac{d}{d\tau} \tilde p_\varepsilon = -\varepsilon u_0' \big( \tilde q_\varepsilon \big) \tilde p_\varepsilon - \varepsilon \, V' (\tilde q_\varepsilon )$. We will see that terms such as $\pi_0$ and $\pi_2$ no longer vanish, however, $\chi_1$ and $\chi_3$ still do.

We begin with the lowest order terms in the expansion and here we obtain
\begin{eqnarray}
    \pi_{-1} = m (q_B-q_A) , \qquad \frac{d\chi_1}{d \tau}= \frac{\pi_0}{m}+ u_0 (\overleftrightarrow{q}), \qquad 
    \frac{d\pi_0}{d\tau} = - u_0' (\overleftrightarrow{q})\, \pi_{-1}.
\end{eqnarray}
By inspection, we see that $\frac{d^2 \chi_1}{d\tau^2}= \frac{d\pi_0}{d\tau}-  (q_B-q_A) u_0' (\overleftrightarrow{q})\equiv 0$, implying that $\chi_1(\tau ) = a\tau +b$. However the end point conditions $\chi_1(0)=\chi_1(1)=0$ again force $\chi_1\equiv 0$. As a consequence we learn that $\pi_0$ is non-zero and given by
\begin{eqnarray}
    \pi_0 =- m u_0 ( \overleftrightarrow{q}).
\end{eqnarray}

The next order terms yield the relations
\begin{eqnarray}
     \frac{d\chi_2}{d \tau}= \frac{\pi_0}{m},  \qquad 
    \frac{d\pi_1}{d\tau} = - u_0' (\overleftrightarrow{q})\, \pi_{0}
    - V' (\overleftrightarrow{q}).
\end{eqnarray}
Here we see that $ \frac{d\pi_1}{d\tau} = - mu_0(\overleftrightarrow{q}) u_0' ( \overleftrightarrow{q})
    - V' (\overleftrightarrow{q})$ which we can integrate to obtain
    \begin{eqnarray}
        \pi_1= \frac{1}{q_B-q_A}\bigg(mu_0(\overleftrightarrow{q})^2 - V (\overleftrightarrow{q}) -m\overline{u_0^2}+ \overline{V} \bigg)
    \end{eqnarray}
where the additive constant ensures that $\int_0^1 \pi_1\, d \tau=0$ (which still follows since $\chi_1 =0$).

The next terms are
\begin{eqnarray}
     \frac{d\chi_3}{d \tau}= \frac{\pi_2}{m} + u_0 '(\overleftrightarrow{q} )\, \chi_2,  \qquad 
    \frac{d\pi_2}{d\tau} = - u_0' (\overleftrightarrow{q})\, \pi_{1}
    - \pi_{-1} u_0'' (\overleftrightarrow{q}).
\end{eqnarray}
Combining these, we find that $\frac{d^2 \chi_3}{d\tau^2}=0$ and, similar to before, we must have $\chi_3 =0$. As a consequence, we find that $\pi_2$ is not zero, but is given by
\begin{eqnarray}
    \pi_2 =- m \, u_0'( \overleftrightarrow{q}) .
\end{eqnarray}

We see that the $O(\varepsilon^{-1}$ term in the short-time development of the classical action $S_{\text{cl.}} (B|A)$ is the same free particle term already encountered. The $O(1)$ term is however
\begin{eqnarray*}
    \bigg( \int_0^1 \cancel{\pi_0 (q_B-q_A )} d\tau \bigg)
    -\bigg( \int_0^1 [ \cancel{\frac{\pi_{-1}\pi_0}{m}}+\pi_{-1} u_0 (\overleftrightarrow{q})]d\tau \bigg)
\end{eqnarray*}
The crossed out terms cancel exactly leaving the final term which integrates to $-m (q_B-q_A) \overline{u_0}$.

The $O(\varepsilon $ term is
\begin{eqnarray*}
    \bigg( \int_0^1 \cancel{\pi_1 (q_B-q_A )} d\tau \bigg)
    -\bigg( \int_0^1 [ \cancel{\frac{\pi_{-1}\pi_1}{m}}+ \frac{\pi_0^2}{2m}+\pi_{0} u_0 (\overleftrightarrow{q})]d\tau \bigg)
\end{eqnarray*}
and this equals $\frac{1}{2} m \overline{u_0^2}-\overline{V}$.

At $O(\varepsilon^2)$ we get the contribution
\begin{eqnarray*}
    \bigg( \int_0^1 \cancel{\pi_0  d\chi_2 }\bigg)
    -\bigg( \int_0^1 [ \cancel{\frac{\pi_{0}\pi_1}{m}}+ \bcancel{\frac{\pi_{-1}\pi_2}{m}}+\pi_{1} u_0 (\overleftrightarrow{q})]d\tau +\bcancel{\pi_{-1} u_0'(\overleftrightarrow{q}) \chi_2}\bigg)
\end{eqnarray*}
which leaves
\begin{eqnarray*}
    -\int_0^1 u_0(\overleftrightarrow{q})\pi_1\, d \tau
    \equiv - \frac{1}{q_B-q_A}\bigg( m \overline{u_0^3} - \overline{u_0 V} - m \overline{u_0}\overline{u_0^2}+ \overline{u_0}\overleftrightarrow{V} \bigg).
\end{eqnarray*}

The short-time expansion is then
 \begin{eqnarray}
    S_{\text{cl.}}(B|A) &=& \varepsilon^{-1} 
         \frac{1}{2} m (q_B-q_A)^2 -m (q_B-q_A) \overline{u_0} \nonumber \\
         && \varepsilon \bigg( \frac{1}{2} m \overline{u_0^2}-\overline{V}\bigg)
         -\varepsilon ^2    \frac{1}{q_B-q_A}\bigg( m \overline{u_0^3} - \overline{u_0 V} - m \overline{u_0}\overline{u_0^2}+ \overline{u_0}\overline{V} \bigg)
         \nonumber \\
         &&    +O({\varepsilon^3}) ,
         \label{eq:Smag_expand}
    \end{eqnarray}
    We see that there are now nonzero $O(1)$ and $O(\varepsilon^2 )$ terms appearing due to the magnetic term $u_0$.
%%%%%%%%%%%%%%%%%%%%%%%%%%%%%%%%%%%%%%%%%%%%%%%%%%%%
%%%%%%%%%%%%%%%%%%%%%%%%%%%%%%%%%%%%%%%%%%%%%%%%%%%%%%%%%%%%%%%%%%%%%%%%%%%%%%%%%%%%%%%%%%%%%%%%%%%%%%%%%%%%%%%%%%%%%%%%%%%%%%%%%%%%%%%%%%%%%%%%%%%%%%%%%%%%%%%%
\section{Path Integral Approximations}
We now present an alternative strategy for defining quantizations based on the observations of Cohen \cite{Cohen1970} and of Smolyanov, Tokarev, and Truman \cite{Smol_Truman}.

A quantization rule $\mathscr{Q}$ can be defined as above by averaging over $\tau$-quantization rules with a probability measure $\mathbb{P}$ on [0,1]. Here we assign to each continuous function $H$ on phase space a continuous function $\bar H_{\mathscr{Q}}$ given by
\begin{eqnarray}
    \bar H_{\mathscr{Q}} = \int_0^1 H \big( (1-\tau ) q_A + \tau q_B ,p \big) \, \mathbb{P}[d \tau ]
\end{eqnarray}
with the property that
\begin{eqnarray}
    \lim_{q_B \to q_A} \bar H_{\mathscr{Q}} (q_A, q_B ,p) = H(q_A , p).
\end{eqnarray}

For a given prescription, we define a pseudo-differential operator $\hat H$ by
\begin{eqnarray}
    \left( \hat H \psi \right) (q_B)
    =\lim_{E \to \infty} 
    \int
    \bar H_{\mathscr{Q}}  (q_A, q_B ,p)\vert_{\le E }
    \,
    e^{i p (q_B-q_A)/\hbar} 
    \, \psi ( q_A)
    \, \frac{dq_A dp}{2 \pi \hbar}
\end{eqnarray}
where the limit is in the $L^2( \mathbb{R},dq)$-norm. The domain of $\hat H$ consists of those $\psi \in L^2 (\mathbb{R},dq)$ for which the limit exists. (This generalizes the definition in \cite{Smol_Truman} which treats the $\tau$-quantization case.)

We refer to $H$ as the $\mathscr{Q}$-symbol for the pseudo-differential operator $\hat H$. Ideally, there is a self-adjoint operator $\mathscr{Q} (H)$ for which the domain of $\hat H$ is a core, and which agrees with $\hat H$ on this domain. In this case, we would refer to $\mathscr{Q}(H)$ as the quantization of $H$.

This construction shows that the operator ambiguity arises from the averaging prescription used, as originally suggested by Cohen \cite{Cohen1970}.

Let us suppose that we have a quantization procedure $\mathscr{Q}$ for turning phase space functions into operators. In general, the operator product $\mathscr{Q}(f) \, \mathscr{Q}(g)$ will be something of the form $\mathscr{Q}(f \star g)$ where $\star$ is an (asymmetric though associative) product of phase space functions, necessarily different from the usual point-wise product.

Suppose that we fix a Hamiltonian $H$ on phase space and consider the dynamics governed by $\hat H =\mathscr{Q}(H)$. The unitary evolution operator will be
\begin{eqnarray}
    \hat{U} (t_B,t_A ) =
    e^{-i(t_B-t_A) \mathscr{Q}(H)/\hbar}.
\end{eqnarray}
While we might expect $\mathscr{Q}$ to be linear, given that we have argued that it involves an averaging $H\to \bar H$ step (whichever average we select!) and an integration step (\ref{eq:kerner_int}), however, it will generally not be the case that $\mathscr{Q}(H^n)$ is not $\mathscr{Q}(H)^n$ for $n>1$.

However, we might expect that for $\psi$ in a core domain of $\mathscr{Q}(H)$ we at least have
\begin{eqnarray}
    e^{-it \mathscr{Q}(H)/\hbar}
    \, \psi =
    \lim_{n\to \infty} \left( \mathscr{Q}(e^{-it H/ n\hbar} ) \right)^n \psi ,
    \label{eq:limit}
\end{eqnarray}
with convergence in the Hilbert norm. The limit may be rewritten as
\begin{eqnarray}
    \lim_{n\to \infty}  \mathscr{Q} \left(\underset{n-\text{fold}}{ 
    e^{-it H/ n\hbar}  \star \cdots \star e^{-it H/ n\hbar}   }
    \right) \psi .
\end{eqnarray}

Proving the convergence rigorously can be challenging, but we can describe a method introduced by Smolyanov, Tokarev and Truman \cite{Smol_Truman}. It is based on the following Theorem of Chernoff\cite{Chernoff}.

\begin{theorem}[Chernoff, \cite{Chernoff}]
    Let $:t\mapsto \hat F (t)$ be a strongly continuous function from $[0, \infty)$ taking values in the bounded operators over a fixed Hilbert space. We suppose that $\| \hat F (t) \| \le 1$, $\hat F (0)= \hat I$, and that the operator $\frac{d}{dt} \hat F (0^+)$ is closable with closed extension $\frac{1}{i\hbar}\hat H$ with $\hat H$ self-adjoint. Then $\hat F (\frac{t}{n})^n$ converges to $e^{-i t \hat H/ \hbar}$ in the strong operator topology and uniformly in time for $t\in [0,T]$, for each $T<\infty$.
\end{theorem}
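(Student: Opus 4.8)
The plan is to follow the classical argument of Chernoff, which cleanly separates the product formula into (i) a purely elementary operator estimate and (ii) a Trotter--Kato semigroup-convergence step.

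\textbf{Step 1 (Chernoff's Lemma).} First I would prove: for any linear contraction $C$ on the Hilbert space and any $n\in\mathbb N$, $\|C^n x - e^{n(C-\hat I)}x\|\le \sqrt n\,\|(C-\hat I)x\|$. Writing $e^{n(C-\hat I)}=e^{-n}\sum_{k\ge0}\frac{n^k}{k!}C^k$ gives $C^n-e^{n(C-\hat I)}=e^{-n}\sum_k\frac{n^k}{k!}(C^n-C^k)$; the telescoping identity $C^m-\hat I=\sum_{j<m}C^j(C-\hat I)$ together with $\|C^j\|\le1$ yields $\|(C^n-C^k)x\|\le|n-k|\,\|(C-\hat I)x\|$, and then Cauchy--Schwarz against the Poisson$(n)$ weights $e^{-n}n^k/k!$ bounds $\sum_k e^{-n}\tfrac{n^k}{k!}|n-k|$ by $\sqrt n$ (the standard deviation of a Poisson$(n)$ variable).

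\textbf{Step 2 (reduction to a semigroup limit).} Apply the lemma with $C=\hat F(t/n)$. Set $\hat B_h:=\tfrac1h(\hat F(h)-\hat I)$; since $\|\hat F(h)\|\le1$ one has $\|e^{s\hat B_h}\|=e^{-s/h}\|e^{(s/h)\hat F(h)}\|\le e^{-s/h}e^{s/h}=1$, so each $\hat B_h$ generates a contraction semigroup. The lemma compares $\hat F(t/n)^n$ with $e^{n(\hat F(t/n)-\hat I)}=e^{t\hat B_{t/n}}$ with error at most $\sqrt n\,\|(\hat F(t/n)-\hat I)x\|=\tfrac{t}{\sqrt n}\|\hat B_{t/n}x\|$. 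For $x$ in the domain $D$ of $\tfrac{d}{dt}\hat F(0^+)$ we have $\hat B_hx\to\tfrac1{i\hbar}\hat Hx$ as $h\to0^+$, hence $\sup_{t\le T}\tfrac{t}{\sqrt n}\|\hat B_{t/n}x\|\le\tfrac{T}{\sqrt n}\sup_{0<h\le T/n}\|\hat B_hx\|\to0$; since $D$ is dense and all operators in sight are contractions, this error tends to $0$ uniformly in $t\in[0,T]$ for every vector. It then remains to show $e^{t\hat B_{t/n}}x\to e^{-it\hat H/\hbar}x$ uniformly for $t\in[0,T]$.

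\textbf{Step 3 (Trotter--Kato).} Here one uses that $D$ is a core for $\tfrac1{i\hbar}\hat H$ --- this is exactly what the closability hypothesis provides --- and that, $\hat H$ being self-adjoint, $\tfrac1{i\hbar}\hat H$ generates the unitary, hence $C_0$-contraction, group $e^{-it\hat H/\hbar}$. From $\hat B_hx\to\tfrac1{i\hbar}\hat Hx$ on the core $D$ together with $\|e^{s\hat B_h}\|\le1$, one deduces strong resolvent convergence $(\lambda-\hat B_h)^{-1}\to(\lambda-\tfrac1{i\hbar}\hat H)^{-1}$ for $\operatorname{Re}\lambda>0$, and the Trotter--Kato approximation theorem then gives $e^{s\hat B_h}x\to e^{-is\hat H/\hbar}x$ as $h\to0^+$, uniformly for $s$ in compacts. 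Specializing $h=t/n$, $s=t$ and combining with strong continuity of $t\mapsto e^{-it\hat H/\hbar}$ yields the required uniform-in-$t$ convergence; a triangle inequality with Step 2 finishes the proof.

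\textbf{Main obstacle.} The genuinely delicate point is Step 3: the approximating generator $\hat B_{t/n}$ depends on $t$ as well as on $n$, whereas Trotter--Kato is usually stated for a fixed sequence of generators and a fixed time variable, so the convergence must be upgraded to be uniform in the coupled parameters (by a compactness/subsequence argument on $[0,T]$, or by quantifying the resolvent convergence), and one must check that the hypothesis ``$\tfrac{d}{dt}\hat F(0^+)$ is closable with closed extension $\tfrac1{i\hbar}\hat H$ self-adjoint'' really supplies both the pointwise limit $\hat B_hx\to\tfrac1{i\hbar}\hat Hx$ and a core. Everything else is either soft (density, contractivity) or the elementary computation of Step 1.
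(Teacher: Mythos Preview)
The paper does not prove this theorem at all: it is stated as a quoted result with a citation to Chernoff's original paper and is then invoked as a black box to justify the convergence in equation~(\ref{eq:limit}). There is therefore no ``paper's own proof'' to compare against.

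That said, your sketch is the standard Chernoff argument and is essentially correct. Step~1 is the classical Chernoff lemma; Step~2 is the right reduction; and your identification of the delicate point in Step~3 (the $t$-dependence of the approximating generator $\hat B_{t/n}$ and the need for uniformity in the coupled parameters) is accurate. One small remark: in Step~2 you need the uniform bound $\sup_{0<h\le h_0}\|\hat B_h x\|<\infty$ for $x\in D$, which does follow from the convergence $\hat B_h x\to\tfrac{1}{i\hbar}\hat H x$ but is worth stating explicitly, since the supremum is over a half-open interval approaching $0$.
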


It follows that the convergence in (\ref{eq:limit}) holds if we can establish that $\hat F (t) = \mathscr{Q}(e^{-it H/n \hbar })$ satisfies the conditions of Chernoff's Theorem.

This strategy is implemented in \cite{Smol_Truman} for the $\tau$-quantization in the case where the Hamiltonian takes the form $H=T+V$ where $T=T(p)$ is real-valued and $V_0(q,p)$ is real-valued and is the (phase space) Fourier transform of some measure. There, they show convergence for the class of quantizations $\mathscr{S}_\tau$.

%%%%%%%%%%%%%%%%%%%%%%%%%%%%%%%%%%%%%%%%%%%%%%%%%%%%%%%%%%%%%%%%%%%%%%%%%%%%%%%%%%%%%%%%%%%%%%%%%%%%%%%%%%%%%%%%%%%%%%%%%%%%%%%%%%%%%%%%%%

\section{Conclusion}
Hamiltonian mechanics (both classical and quantum!) has the truly remarkable feature that the infinitesimal generators of one-parameter groups of canonical transformations correspond to the physical observables. 

The Feynman path integral, in principle, gives us a route to
replace classical observables (functions on phase space) with quantum observables (self-adjoint operators). Despite initial criticism by Cohen, the proposal by Kerner and Sutcliffe that this should be the Born-Jordan rule found subsequent justification.

As we show in this paper, to avoid falling into Kauffmann's trap, the class of Hamiltonians $\mathcal{O}$ that are amenable to the short-time limit with a fixed spatial separation turns out to be those of the form $\frac{1}{2m}p^2+u_0 (q)p+V(q)$. 
We have worked with a one-dimensional system here for transparency, but this is easily generalized and we see that the typical Hamiltonian we consider in 3 dimensions is just the standard non-relativistic Hamiltonian for a particle of charge $-e$:
\begin{eqnarray}
    H= \frac{1}{2m} | \mathbf{p}+ e \mathbf{A} |^2  -e \Phi .
\end{eqnarray}
with scalar potential $\Phi$ and vector potential $\mathbf{A}$ dependent on the position vector $\mathbf{q}=(x,y,z)$. (We have ignored the issue of time-dependent Hamiltonians so far, however, this presents no major difficulty in the short-time limit and is easily dealt with. Therefore, we can readily accommodate time-dependent potentials as well!)

The program of isolating the Born-Jordan rule as the ``correct'' quantization procedure still has a significant gap. Foregoing the argument of whether the Born-Jordan rule is the unique mechanism that realizes the correct short-time propagator behavior and thereby provides the correct self-adjoint Hamiltonian to corresponding to the classical one, we still have a problem: it applies only to classical observables that lie in the class $\mathcal{O}$!

It then selects the requirement that the quantization rule satisfy the universal  
\begin{eqnarray}
    \mathscr{Q} \big( p f(q) \big) \equiv \mathscr{B}\big( p f(q) \big) .
\end{eqnarray}
It is worth recalling that the Born-Jordan rule $\mathscr{B}$ \cite{BJ1925b} started with the requirement that
\begin{eqnarray}
\mathscr{B} \big( \phi (q) \psi (p) \big) \triangleq \frac{1}{i \hbar} [ \Phi ( \hat q) , \Psi (\hat p ) ],
\label{eq:BJ_rule}
\end{eqnarray}
where $\Phi$ and $\Psi$ are anti-derivatives of $\phi$ and $\psi$, respectively. (That is, $\phi = \Phi'$ and $\psi = \Psi'$.) But in the present context we are restricted to $\Phi (p) = a p^2 +bp$.

It is then easy to see that Born-Jordan rule leads to
\begin{eqnarray}
    \mathscr{B} \big( \phi (q) \,p \big) = \frac{1}{2} \phi (\hat q) \, \hat p + \frac{1}{2}\hat p \phi (\hat  q ) .
\end{eqnarray}
However, this is not the only rule that does this. From (\ref{eq:tau_form}), we have
\begin{eqnarray}
     \mathscr{S}_\tau \big ( \phi (q) \,p \big) =( 1-\tau ) \phi (\hat q) \, \hat p + \tau \, \hat p \phi (\hat  q ) .
\end{eqnarray}
so the Weyl rule also satisfies this property. In fact, every quantization of the form $\int_0^1 \mathscr{S}_\tau (\cdot ) \, \mathbb{P}[d \tau ]$ will satisfy this provided that $\mathbb{P}$ has mean 1/2.
%%%%%%%%%%%%%%%%%%%%%%%%%%%%%%%%%%%%%%%%%%%%%%%%%%

\end{document}